\documentclass[preprint,12pt,authoryear]{elsarticle}

\usepackage{amsmath,amssymb,amsthm,mathtools}
\usepackage{setspace}
\usepackage{hyperref}
\usepackage{enumitem}
\usepackage{booktabs}
\usepackage{xcolor}
\usepackage{graphicx}
\usepackage{float}
\usepackage{threeparttable}
\graphicspath{{figures/}}
\hypersetup{
  colorlinks=true,
  linkcolor=blue, citecolor=blue, urlcolor=blue,
  pdftitle={Identification of Latent Group Effects under Conditional Calibration},
}

\newtheorem{theorem}{Theorem}
\newtheorem{proposition}[theorem]{Proposition}
\newtheorem{lemma}{Lemma}
\newtheorem{corollary}{Corollary}
\theoremstyle{definition}
\newtheorem{assumption}{Assumption}

\theoremstyle{remark}
\newtheorem{remark}{Remark}
\newtheorem{atheorem}{Theorem}[section]
\newtheorem{alemma}[atheorem]{Lemma}
\newtheorem{aproposition}[atheorem]{Proposition}

\newcommand{\E}{\mathbb{E}}
\newcommand{\Pbb}{\mathbb{P}}
\newcommand{\Var}{\operatorname{Var}}
\newcommand{\Cov}{\operatorname{Cov}}
\newcommand{\R}{\mathbb{R}}
\newcommand{\N}{\mathcal{N}}
\newcommand{\1}{\mathbf{1}}
\newcommand{\pto}{\xrightarrow{\;p\;}}
\newcommand{\dto}{\xrightarrow{\;d\;}}
\newcommand{\as}{\text{ a.s.}}
\newcommand{\plim}{\operatorname{plim}}
\newcommand{\sgn}{\operatorname{sgn}}
\newcommand{\Vs}{V^{*}}
\newcommand{\tauor}{\hat{\tau}_{\mathrm{or}}}
\newcommand{\tauht}{\hat{\Delta}_{\mathrm{ht}}}

\journal{arXiv}

\begin{document}

\begin{frontmatter}

\title{Identification of Latent Group Effects under Conditional
  Calibration}

\author{Marcell T.\ Kurbucz\corref{cor1}}
\ead{m.kurbucz@ucl.ac.uk}
\cortext[cor1]{Corresponding author.}
\affiliation{organization={Institute for Global Prosperity, The Bartlett,
               University College London},
             addressline={9--11 Endsleigh Gardens},
             city={London},
             postcode={WC1H 0EH},
             country={United Kingdom}}

\begin{abstract}
\noindent
\footnotesize
We study identification of a structural group effect when the group
indicator $G\in\{0,1\}$ is unobserved but the analyst observes a calibrated
probability score $p$ satisfying $\E[G\mid p,X]=p$. Under a
constant-coefficient structural mean model, the latent-group coefficient
$\tau$ is point-identified by a closed-form ratio of observable moments
whose denominator is the residual score variance
$V^*=\E[(p-\E[p\mid X])^2]$. Identification fails exactly when the score is
a deterministic function of $X$; we construct an explicit continuum of
observationally equivalent models showing the failure is genuine. The
marginal latent mean gap decomposes as $\tau$ plus a compositional term
that is itself identified in closed form, and we characterise when the two
coincide. The oracle estimator is $\sqrt{n}$-consistent and asymptotically
normal with a closed-form sandwich variance. Under calibration error
bounded by $\delta$, the bias obeys a sharp bound proportional to
$\delta/V^*$, and hard-threshold classification attenuates the estimated
gap. Monte Carlo experiments confirm the theory, including the
variance-weighted estimand under heterogeneous effects.
\end{abstract}

\begin{keyword}
\footnotesize
latent groups \sep identification \sep conditional calibration \sep 
moment equations \sep group effects \\[4pt]
\textit{JEL codes:} C14 \sep C21 \sep C38 \sep D63
\end{keyword}

\end{frontmatter}

\section{Introduction}
\label{sec:intro}
\noindent
A pervasive challenge in empirical work is the measurement of outcome
differences between groups when group membership is not directly observed.
Poverty status, immigration status, informal employment, fuel insecurity, and
latent health conditions are leading examples. In such settings the analyst
typically has access to a probability score $p_i\in[0,1]$ encoding belief
that unit $i$ belongs to the group of interest, but never observes the binary
indicator $G_i\in\{0,1\}$ itself.

A concrete empirical setting that has all the ingredients is the auditing of
lending disparities when the borrower's race is not recorded. United States
regulators routinely conduct fair lending analysis by proxying the
unrecorded race with the Bayesian Improved Surname Geocoding (BISG)
methodology \citep{cfpb2014}; BISG assigns each applicant a posterior
probability of belonging to each racial group based on surname and geography
\citep{elliott2009}, and the indicator itself is unobserved by both the
lender and the analyst. The audit question---does the expected loan outcome
$Y$ differ by race $G$ \emph{after} conditioning on the credit
characteristics $X$ that legitimately enter the pricing decision?---is
precisely a latent-group-effect question in which the analyst holds a
probability $p$ in place of the indicator, and the proxy probability is built
from information (surname, geography) that is strictly richer than the audit
controls $X$. The algorithmic-fairness literature has documented both the
practice and its pitfalls \citep{chen2019,kallus2022}; what has been missing
is an identification-theoretic account of when the probability itself, used
directly rather than thresholded into a pseudo-indicator, point-identifies
the disparity. This paper provides that account, and its central quantities
are estimable in the BISG setting: the residual score variance below is
computable from $(p,X)$ alone, and self-reported race on a validation
subsample (available, for instance, in mortgage data collected under the Home
Mortgage Disclosure Act) makes the key calibration condition testable.

The central question we address is: \emph{under what conditions, and by what
formula, can a structural group effect be identified from the joint law of
observables $(Y,X,p)$ when $G$ is never observed?}  Throughout, the
structural group effect is the coefficient $\tau$ on the latent indicator in
the conditional mean of the outcome---informally, the shift in the expected
outcome associated with membership for units with the same covariates
(Assumption~\ref{ass:mean} states this formally). We give a precise
answer organised around three claims. First, the structural coefficient $\tau$
is point-identified under mild conditions. Second, identification fails in a
characterisable and sharp way when exactly one of those conditions is violated.
Third, the identified object is distinct from the marginal group mean gap in a
way that can be made fully explicit.

The paper makes four contributions. The first is an identification result.
Under a constant-coefficient structural mean model and the conditional
calibration condition $\E[G\mid p,X]=p$, we prove that $\tau$ is identified
by a weighted moment equation whose denominator
$\Vs=\E[(p-r(X))^{2}]$ is the residual variance of the score after partialling
on $X$. The formula is in closed form and admits a transparent interpretation:
it is formally analogous to an instrumental-variables estimand in which the
score residual $a=p-r(X)$ plays the role of an instrument for the latent
deviation $G-r(X)$; the calibration condition supplies the first-stage
relevance and the mean-independence condition in the structural model supplies
the exclusion restriction.

The second contribution is an exact characterisation of identification
failure. We prove that identification fails exactly at $\Vs=0$, i.e.\ when
the score is a deterministic function of $X$: the moment equation becomes
uninformative there, and, under a mild nondegeneracy condition on the
conditional outcome distribution, we construct an explicit continuum of
observationally equivalent models whose coefficients span a nondegenerate
interval around zero. The construction couples the alternative latent
indicator to the outcome itself; this is essential, because an indicator
generated independently of $Y$ given $(p,X)$ can carry only a zero
structural coefficient. A by-product of the argument is honest partial
identification: at the boundary the observationally equivalent coefficients
form a bounded set governed by the conditional outcome dispersion, not all
of $\R$.

The third contribution is a clean separation between the identified structural
coefficient and the marginal latent mean gap
$\Delta_{\mathrm{marg}}=\E[Y\mid G=1]-\E[Y\mid G=0]$. We decompose
$\Delta_{\mathrm{marg}}=\tau+C$ and show that the compositional term $C$ is
itself point-identified in closed form under conditional calibration---so
the marginal gap is identified alongside $\tau$, and reporting one or the
other is a choice of estimand; we also give a necessary and sufficient
condition for $C=0$.

The fourth contribution is oracle inference and robustness. We establish
$\sqrt{n}$-asymptotic normality of the oracle estimator with an explicit
sandwich variance, compute the exact probability limit under calibration
failure, and derive a sensitivity bound that is sharp over the class of all
calibration error functions bounded uniformly by $\delta$. For feasible
estimation we exhibit a Neyman-orthogonal reformulation of the moment, whose
$\sqrt{n}$-normality under cross-fitting follows by the standard
double-machine-learning argument of \citet{chernozhukov2018}.

Our paper sits at the intersection of four strands of the literature.
Within the \emph{misclassification} literature, \citet{lewbel2007} showed that
average treatment effects are attenuated under misclassification of a binary
regressor and proposed corrections; \citet{mahajan2006} obtained identification
using an instrumental variable; \citet{kasahara2022} extended this to the
endogenous case. Our setting is complementary: instead of observing a noisy
binary label, the analyst observes a calibrated \emph{probability} for $G$,
which changes both the identification argument and the identified object.

The \emph{proxy variable and measurement error} literature
\citep{hu2008,schennach2016} establishes nonparametric identification of full
latent-variable distributions via rank conditions on integral operators. Our
setting is more restrictive---we target only the scalar $\tau$---but our
assumptions are correspondingly weaker and the identification formula is
closed-form. The structure of our moment equation parallels the partially
linear model \citep{robinson1988} and semiparametric IV \citep{newey1990},
and we emphasise that we claim no methodological novelty in the
partialling-out algebra itself. The contribution lies elsewhere, in three
statements that have no counterpart in a partially linear IV regression
with an observable instrument. First, the first stage is \emph{derived},
not assumed: conditional calibration implies that the conditional
covariance between the latent $G$ and the score $p$ equals the conditional
variance of $p$ exactly, so the population first-stage slope is identically
one (Lemma~\ref{lem:cov-gp})---there is no analogue of positing relevance
for an instrument, and no first-stage coefficient to estimate. Second, the
relevance condition $\Vs>0$ and its exact failure boundary are estimable
from $(p,X)$ alone, before the outcome is touched; a first stage involving
a latent regressor admits no such observable diagnostic. Third, at the
boundary the failure is characterised constructively, with the
observationally equivalent coefficients spanning an interval whose width is
governed by the conditional outcome dispersion
(Proposition~\ref{prop:failure}(b))---a partial-identification structure
specific to the latent-indicator model---and away from the boundary the
sharp sensitivity theory prices violations of calibration, the assumption
that replaces instrument validity.
Finally, the \emph{algorithmic fairness} literature \citep{kallus2022,chen2019}
has studied disparity estimation with unobserved protected attributes under
calibration-type assumptions; our contribution to that context is a formal
identification-theoretic treatment with a closed-form formula and exact failure
and sensitivity characterisations.

The remainder of the paper is organised as follows.
Section~\ref{sec:model} sets up the model and discusses how conditionally
calibrated scores arise in practice.
Section~\ref{sec:ident} proves identification and characterises failure.
Section~\ref{sec:structural} distinguishes the structural coefficient from
the marginal gap.
Section~\ref{sec:inference} covers oracle inference.
Section~\ref{sec:robustness} develops robustness to calibration failure.
Section~\ref{sec:feasible} treats feasible estimation, including Neyman
orthogonality.
Section~\ref{sec:mc} presents Monte Carlo evidence.
Section~\ref{sec:discussion} discusses extensions and open directions.
Section~\ref{sec:conclusion} concludes.
The Appendix contains all proofs.

\section{Model and Assumptions}
\label{sec:model}
\noindent
Let $(\Omega,\mathcal{F},\Pbb)$ be a probability space. We observe i.i.d.\
draws $(Y_i,X_i,p_i)\in\R\times\mathcal{X}\times[0,1]$ for $i=1,\ldots,n$,
from the joint distribution $P_{Y,X,p}$. There exists on the same probability
space an unobserved binary variable $G_i\in\{0,1\}$; $G_i=1$ denotes
membership in the latent group of interest. The measurable covariate space
$\mathcal{X}$ is arbitrary.

We write $m(x):=\E[Y\mid X=x]$, $r(x):=\E[p\mid X=x]$, and
$\pi(x):=\E[G\mid X=x]$ for the conditional mean functions. The derived
quantities
\begin{equation}
  z:=2p-1\in[-1,1],\quad R:=Y-m(X),\quad a:=p-r(X)
  \label{eq:derived}
\end{equation}
satisfy $\E[R\mid X]=\E[a\mid X]=0$. The \emph{residual score variance}
\begin{equation}
  \Vs:=\E\!\left[(p-r(X))^{2}\right]=\E[\Var(p\mid X)]\;\geq 0
  \label{eq:Vs}
\end{equation}
measures the variation in $p$ not explained by $X$; it is the key quantity
governing identification.

\begin{assumption}[Structural conditional mean]
\label{ass:mean}
There exist a measurable function $\mu:\mathcal{X}\to\R$ and a scalar
$\tau\in\R$ such that $\E[Y\mid G,p,X]=\mu(X)+\tau G$ a.s.
\end{assumption}

Assumption~\ref{ass:mean} has two components. The effect of latent membership
on the conditional mean of $Y$ is constant in $X$. Additionally, the score
$p$ is mean-independent of $Y$ once $(G,X)$ are known: conditional on true
membership, the analyst's probability score conveys no further information
about the expected outcome.

\begin{assumption}[Conditional calibration]
\label{ass:cal}
$\E[G\mid p,X]=p$ a.s.
\end{assumption}

Assumption~\ref{ass:cal} is the sole formal link between the latent indicator
$G$ and the observed score $p$. It is a calibration condition: $p$ need not
equal the propensity score $\Pbb(G=1\mid X)$ but must be an unbiased predictor
of $G$ given all observed information $(p,X)$. The condition is strictly
stronger than marginal calibration ($\E[G\mid p]=p$): it additionally requires
that the covariates $X$ carry no predictive content for $G$ beyond the score.
This is a substantive restriction, and it is worth stating plainly which
scores satisfy it and which do not. Scores built from information
\emph{coarser} than $X$ generally fail it: an area-level prevalence rate, for
example, is not conditionally calibrated once $X$ contains individual-level
predictors of membership. Conversely, scores built as posterior probabilities
from information \emph{richer} than $X$ satisfy it by construction, as
Remark~\ref{rem:obtain} makes precise. Off-the-shelf classifier outputs sit
in between: even when marginally calibrated they need not be conditionally
calibrated, and the condition should be treated as testable-and-enforceable
rather than automatic. Remark~\ref{rem:obtain} describes how; the sensitivity
theory of Section~\ref{sec:robustness} quantifies the cost of residual
violations.

\begin{remark}[Obtaining conditionally calibrated scores]
\label{rem:obtain}
Since $G$ is unobserved, one may ask how a score satisfying
Assumption~\ref{ass:cal} can ever be constructed or verified---without
observing the very variable whose absence motivates the framework. Two
mechanisms resolve this.

\emph{(i) Calibration by construction.} Let $W$ be an information set with
$\sigma(X)\subseteq\sigma(W)$, and suppose $p=\Pbb(G=1\mid W)$ is a posterior
probability computed from a correctly specified model of $G$ given $W$ (for
example, a Bayes posterior combining a prior with unit-level signals, as in
the BISG construction of \citealp{elliott2009}). Then, because
$\sigma(p,X)\subseteq\sigma(W)$, the tower property gives
\[
  \E[G\mid p,X]=\E\bigl[\E[G\mid W]\mid p,X\bigr]=\E[p\mid p,X]=p ,
\]
so Assumption~\ref{ass:cal} holds \emph{exactly}, at every level of score
informativeness. Conditional calibration is thus not an incidental property
that must be checked score by score; it is the defining property of a
posterior probability based on richer information than the controls.

\emph{(ii) Validation on a labelled subsample.} In the applications that
motivate the framework, $G$ is typically \emph{administratively missing}
rather than unknowable: race is self-reported in some data collections,
poverty status is measured in audit surveys, health conditions are
adjudicated in registry subsamples. On such a labelled subsample the
conditional-calibration condition is directly testable (regress $G-p$ on
$(p,X)$ and test the zero function), and enforceable: recalibrating $p$ on
$(p,X)$ using the labelled units---conditional recalibration---restores the
condition before the score is deployed on the unlabelled bulk where the
inference is run. The companion paper \citep{kurbucz2026kbs} develops this
diagnostic-and-recalibration protocol in detail. Neither mechanism requires
observing $G$ on the analysis sample itself; the framework is therefore not
circular, but it is also not assumption-free---absent both mechanisms,
Assumption~\ref{ass:cal} must be defended on subject-matter grounds and
stress-tested with the sensitivity bound of
Section~\ref{sec:robustness}.
\end{remark}

\begin{remark}[Sharpness versus calibration]
\label{rem:sharp}
Assumption~\ref{ass:var} below requires the score to retain variation beyond
$X$, and the asymptotic variance in Section~\ref{sec:inference} decreases in
$\Vs$: the framework is most informative when the score is \emph{far} from a
deterministic function of $X$. One might worry that this rewards noisy
scores exactly where the calibration assumption is least credible. The
resolution is that two distinct properties are being conflated.
\emph{Sharpness} is how concentrated $p$ is (how close to $\{0,1\}$);
\emph{calibration} is whether $p$ is an unbiased predictor of $G$ given
$(p,X)$. The two are logically independent: a Bayes posterior based on a weak
signal is unsharp but exactly conditionally calibrated (Remark~\ref{rem:obtain}(i)
holds at any signal strength), and the Beta-score design of
Section~\ref{sec:mc} exhibits exact calibration at every noise level
$\sigma_u$. Moreover, $\Vs$ is not noise added to truth: by
Lemma~\ref{lem:cov-gp}, $\Vs=\E[\Cov(G,p\mid X)]$, so residual score
variation \emph{is} the score's information about $G$ beyond $X$---a score
with large $\Vs$ is one that genuinely discriminates within covariate cells.
That said, the concern has a legitimate empirical core: for imperfectly
trained classifiers (as opposed to posterior constructions), weak
discrimination and conditional miscalibration do often co-occur, and the
sensitivity bound of Proposition~\ref{prop:robust} prices exactly this
scenario---the worst-case bias scales as $\delta/(2\Vs)$, so a given
calibration error $\delta$ is \emph{most} damaging when $\Vs$ is small.  In
practice we therefore recommend reporting $\widehat{\Vs}$ together with the
sensitivity bound, and validating calibration on a labelled subsample
whenever one is available (Remark~\ref{rem:obtain}(ii)).
\end{remark}

\begin{assumption}[Non-degenerate residual variation]
\label{ass:var}
$\Vs>0$.
\end{assumption}

\begin{assumption}[Moment conditions]
\label{ass:moments}
$\E[Y^{4}]<\infty$ and $\E[p^{4}]<\infty$.
\end{assumption}

Assumption~\ref{ass:moments} implies square-integrability of all relevant
quantities and is used in Sections~\ref{sec:ident}--\ref{sec:inference}.
Asymptotic normality in Section~\ref{sec:inference} uses the full fourth-moment
condition; identification requires only second moments.

The following two lemmas are the structural backbone of the paper.

\begin{lemma}[Structural decomposition]
\label{lem:basic}
Under Assumptions~\ref{ass:mean}--\ref{ass:moments},
\begin{equation}
  \pi(X)=r(X)\as, \qquad m(X)=\mu(X)+\tau r(X)\as
  \label{eq:struct-decomp}
\end{equation}
\end{lemma}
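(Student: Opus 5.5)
Both identities follow from the tower property of conditional expectation, conditioning down from the finer $\sigma$-fields $\sigma(p,X)$ and $\sigma(G,p,X)$ to $\sigma(X)$. The only care needed is integrability and the a.s.\ qualifiers. Assumption~\ref{ass:moments} gives $\E[Y^2]<\infty$ and $\E[p^2]<\infty$, and $G\in\{0,1\}$ is bounded, so every conditional expectation below is well defined and square-integrable.

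For the first identity, $\sigma(X)\subseteq\sigma(p,X)$, so by the tower property and Assumption~\ref{ass:cal},
\[
  \pi(X)=\E[G\mid X]=\E\bigl[\E[G\mid p,X]\mid X\bigr]=\E[p\mid X]=r(X)\as
\]

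For the second identity, $\sigma(X)\subseteq\sigma(G,p,X)$. The tower property and Assumption~\ref{ass:mean} give
\[
  m(X)=\E\bigl[\E[Y\mid G,p,X]\mid X\bigr]=\E[\mu(X)+\tau G\mid X]=\mu(X)+\tau\pi(X)\as,
\]
and substituting $\pi(X)=r(X)$ from the first step yields $m(X)=\mu(X)+\tau r(X)$.

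The one point requiring care is the term $\E[\mu(X)\mid X]=\mu(X)$, which needs $\mu(X)$ to be integrable. I would obtain this from Assumption~\ref{ass:mean} itself: $\mu(X)=\E[Y\mid G,p,X]-\tau G$ a.s.\ is a difference of an $L^2$ random variable and a bounded one, so it lies in $L^2$. With that, $\mu(X)$ is $\sigma(X)$-measurable and integrable, and pulls out of the conditional expectation. The rest is routine.
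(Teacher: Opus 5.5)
Your proposal is correct and follows the paper's proof essentially verbatim: the tower property with Assumption~\ref{ass:cal} gives $\pi(X)=r(X)$, and the tower property with Assumption~\ref{ass:mean}, followed by substituting the first identity, gives $m(X)=\mu(X)+\tau r(X)$. Your integrability check on $\mu(X)$ is valid and is a refinement the paper leaves implicit: $\mu(X)=\E[Y\mid G,p,X]-\tau G$ lies in $L^2$.
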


\begin{lemma}[Residual decomposition]
\label{lem:residual}
Under Assumptions~\ref{ass:mean}--\ref{ass:moments}, the outcome residual
satisfies
\begin{equation}
  R=\tau(G-r(X))+\varepsilon,\qquad \E[\varepsilon\mid G,p,X]=0.
  \label{eq:resid-decomp}
\end{equation}
\end{lemma}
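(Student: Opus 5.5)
The plan is to define $\varepsilon$ as the residual $\varepsilon:=Y-\E[Y\mid G,p,X]$ and then verify that this choice reproduces the decomposition~\eqref{eq:resid-decomp}.

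\emph{Step 1: integrability and the defining property of $\varepsilon$.} Assumption~\ref{ass:moments} gives $\E[Y^2]<\infty$. Hence $\E[Y\mid G,p,X]$ is well defined and square-integrable, and so is $\varepsilon$. By Assumption~\ref{ass:mean},
\[
  \varepsilon=Y-\mu(X)-\tau G .
\]
By construction, and by the tower property, $\E[\varepsilon\mid G,p,X]=\E[Y\mid G,p,X]-\E[Y\mid G,p,X]=0$ a.s. This is the second claim.

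\emph{Step 2: the algebraic identity.} Lemma~\ref{lem:basic} gives $m(X)=\mu(X)+\tau r(X)$ a.s. Substituting into $R=Y-m(X)$ gives
\[
  R=Y-\mu(X)-\tau r(X)=\bigl(\mu(X)+\tau G+\varepsilon\bigr)-\mu(X)-\tau r(X)=\tau\bigl(G-r(X)\bigr)+\varepsilon \quad\text{a.s.}
\]
This is the first claim.

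\emph{Main obstacle.} The only delicate point is the measurability and integrability bookkeeping. Specifically, $\mu(X)$ must be integrable so that the subtraction is legitimate almost surely. This follows because $\mu(X)=\E[Y\mid G,p,X]-\tau G$ is a difference of $L^2$ variables. It also uses that Lemma~\ref{lem:basic} holds on an almost-sure event, which is then intersected with the event from Assumption~\ref{ass:mean}. Beyond this, the proof is a direct substitution.
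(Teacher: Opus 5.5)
Your proof is correct and is essentially the paper's argument: your $\varepsilon=Y-\mu(X)-\tau G$ is the same random variable as the paper's $\varepsilon:=R-\tau(G-r(X))$ once Lemma~\ref{lem:basic} gives $m(X)=\mu(X)+\tau r(X)$. The paper simply runs the substitution in the other order, first computing $\E[R\mid G,p,X]=\tau(G-r(X))$ and then defining $\varepsilon$ as the remainder; the integrability bookkeeping you flag is harmless but not needed, since $\mu$ is real-valued and the identity holds pointwise on an almost-sure event.
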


Proofs are in Appendix~\ref{app:proofs}. The content of
Lemma~\ref{lem:residual} is structural: the entire predictable part of the
outcome residual $R$ is driven by the deviation of true group membership from
its score-implied expectation, $G-r(X)$.

\section{Identification}
\label{sec:ident}
\noindent
\begin{theorem}[Population moment identity and point identification]
\label{thm:moment}
\leavevmode
\begin{enumerate}[label=\textup{(\alph*)}]
  \item Under Assumptions~\ref{ass:mean}, \ref{ass:cal},
    and~\ref{ass:moments},
\begin{equation}
  \E\!\left[(2p-1)(Y-m(X))\right]=2\tau\Vs.
  \label{eq:moment}
\end{equation}
  \item If in addition Assumption~\ref{ass:var} holds, then
\begin{equation}
  \tau=\frac{\E[(2p-1)(Y-m(X))]}{2\,\Vs},
  \label{eq:tau-id}
\end{equation}
so the structural coefficient $\tau$ is point-identified from the joint law
of $(Y,X,p)$.
\end{enumerate}
\end{theorem}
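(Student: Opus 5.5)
The plan is to compute the left-hand side of \eqref{eq:moment} directly, using the residual decomposition of Lemma~\ref{lem:residual} together with the centring property $\E[R\mid X]=0$. Write $2p-1 = 2a + (2r(X)-1)$, where $a=p-r(X)$. This gives
\[
\E[(2p-1)R] = 2\E[aR] + \E[(2r(X)-1)R].
\]
The second term vanishes by iterated expectations: $(2r(X)-1)$ is $X$-measurable, $\E[R\mid X]=0$, and the product is integrable by Cauchy--Schwarz under Assumption~\ref{ass:moments}, since $r(X)\in[0,1]$ and $R$ is square-integrable. It therefore remains to show that $\E[aR]=\tau\Vs$.

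For that term, substitute $R=\tau(G-r(X))+\varepsilon$ from Lemma~\ref{lem:residual}. The error piece $\E[a\varepsilon]$ is zero, because $a$ is $\sigma(p,X)$-measurable and $\E[\varepsilon\mid G,p,X]=0$ implies $\E[\varepsilon\mid p,X]=0$ by the tower property. For the signal piece, condition on $(p,X)$ and apply Assumption~\ref{ass:cal}, $\E[G\mid p,X]=p$. This gives
\[
\E[a(G-r(X))]=\E\bigl[a\,(\E[G\mid p,X]-r(X))\bigr]=\E[a(p-r(X))]=\E[a^2]=\Vs.
\]
This step is essentially the content of Lemma~\ref{lem:cov-gp}, namely $\E[\Cov(G,p\mid X)]=\Vs$. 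Combining the pieces, $\E[(2p-1)R]=2\tau\Vs$, which proves part (a). All the products involved are integrable because $p$, $G$ and $r$ are bounded and $Y$ has finite second moments.

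Part (b) is immediate: divide by $2\Vs>0$, which is allowed under Assumption~\ref{ass:var}. Both the numerator and the denominator are functionals of the law of $(Y,X,p)$ alone, since $m(X)=\E[Y\mid X]$ and $r(X)=\E[p\mid X]$ are determined by that law. Hence $\tau$ is point-identified.

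The step needing the most care is the role of Lemma~\ref{lem:residual}. That lemma is stated under Assumptions~\ref{ass:mean}--\ref{ass:moments}, which include Assumption~\ref{ass:var}, whereas part (a) deliberately drops Assumption~\ref{ass:var}. So I would not invoke the lemma as a black box. Instead I would rederive the decomposition directly from Lemma~\ref{lem:basic}, noting that its proof uses only Assumptions~\ref{ass:mean} and~\ref{ass:cal}. Explicitly, set $\varepsilon:=Y-\mu(X)-\tau G$. Then $\E[\varepsilon\mid G,p,X]=0$ by Assumption~\ref{ass:mean}. Moreover, $m(X)=\mu(X)+\tau\pi(X)=\mu(X)+\tau r(X)$, where the second equality uses calibration and the tower property, $\pi(X)=\E[\E[G\mid p,X]\mid X]=r(X)$. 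Therefore $R=\tau(G-r(X))+\varepsilon$ holds without any variance condition.
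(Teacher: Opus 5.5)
Your proof is correct and is essentially the paper's argument. It uses the same residual decomposition, the same calibration step conditioning on $(p,X)$, and the same split $2p-1=2a+(2r(X)-1)$; the only difference is ordering, since you split first and kill $\E[(2r(X)-1)R]$ via $\E[R\mid X]=0$, while the paper splits last and kills $\E[(2r(X)-1)a]$ via $\E[a\mid X]=0$. Rederiving Lemma~\ref{lem:residual} without Assumption~\ref{ass:var} is a legitimate tightening, because the paper invokes that lemma (formally stated under all four assumptions) in part (a), which drops Assumption~\ref{ass:var}.
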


The proof of Theorem~\ref{thm:moment} is in Appendix~\ref{app:thm-moment}.
The identification formula~\eqref{eq:tau-id} has a transparent algebraic
structure. The numerator $\E[zR]$ is the covariance between the signed score
$z=2p-1$ and the outcome residual $R$, after partialling both on $X$. The
denominator $2\Vs=2\E[a^{2}]$ is twice the residual variance of the score.
The ratio is therefore the slope of the regression of $R$ on $z$ in the
covariate-partialled data, which by Lemma~\ref{lem:cov-gp} in the Appendix
equals the slope of the regression of $R$ on the latent deviation $G-r(X)$.
This is formally analogous to an IV estimand in which $a=p-r(X)$ acts as an
instrument for $G-r(X)$: the calibration condition (Assumption~\ref{ass:cal})
supplies the first-stage relevance, and the mean-independence condition in
Assumption~\ref{ass:mean} supplies the exclusion restriction.

\begin{remark}[Choice of instrument]
\label{rem:instrument}
The signed score $z=2p-1$ is a presentational choice, not a substantive one.
Using $p$ itself in the numerator yields, by the same argument, the simpler
moment condition
\[
  \E[p\,(Y-m(X))]=\tau\Vs ,
\]
which identifies the same $\tau$: writing $p=a+r(X)$ and using
$\E[a\mid X]=0$ collapses the calibration step to
$\E[p(p-r(X))]=\E[a^{2}]=\Vs$ directly. Indeed, any instrument of the form
$\alpha+\beta p$ with $\beta\neq 0$ identifies the same ratio. We retain
$z=2p-1$ because its symmetric range $[-1,1]$ gives the moment equation a
signed-score interpretation that carries over to the hard-threshold analysis
(the attenuation factor $\kappa=2\,\E[|p-\tfrac12|]=\E[|z|]$ of
Appendix~\ref{app:threshold} and the sensitivity bound of
Section~\ref{sec:robustness} are both naturally expressed in $z$), and
because the companion applied paper \citep{kurbucz2026kbs} states
its inherited results in this form. Readers who prefer the minimal
presentation may substitute $p$ for $z$ throughout at the cost of carrying
$\Vs$ in place of $2\Vs$.
\end{remark}

We now show that Assumption~\ref{ass:var} is not merely a regularity
condition but the exact boundary of identification.

\begin{proposition}[Identification failure]
\label{prop:failure}
\leavevmode
\begin{enumerate}[label=\textup{(\alph*)}]
  \item If $\Vs=0$, then both sides of~\eqref{eq:moment} equal zero for
    every $\tau\in\R$.
  \item Suppose $\Vs=0$ and, in addition, there exist constants
    $\underline r\in(0,\tfrac12]$ and $\underline\kappa>0$ such that
    $\underline r\leq r(X)\leq 1-\underline r$ a.s.\ and
    $\E\bigl[(Y-m(X))\,g(Y-m(X))\mid X\bigr]\geq\underline\kappa$ a.s.,
    where $g(u):=u/(1+|u|)$. Set $\bar\tau:=2\,\underline r\,\underline\kappa$.
    Then for every $\tau'$ with $|\tau'|\leq\bar\tau$ there exists a model
    satisfying Assumptions~\ref{ass:mean}, \ref{ass:cal},
    and~\ref{ass:moments} with latent-group coefficient $\tau'$ in which
    the observables $(Y,X,p)$ have exactly their original joint
    distribution. Hence $\tau$ is not point-identified: the identified set
    contains the nondegenerate interval $[-\bar\tau,\bar\tau]$.
  \item $\Vs=0$ if and only if $p=r(X)$ almost surely.
\end{enumerate}
\end{proposition}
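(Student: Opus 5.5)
Parts (c) and (a) are short computations. Part (b) is an explicit construction of an alternative latent indicator $G'$ that is coupled to the outcome.

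\emph{Part (c).} Since $\Vs=\E[(p-r(X))^2]$ is the expectation of a nonnegative random variable, it vanishes if and only if $p-r(X)=0$ almost surely. The converse direction is immediate.

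\emph{Part (a).} If $\Vs=0$, then by (c) we have $p=r(X)$ a.s. Hence $2p-1=2r(X)-1$ is $\sigma(X)$-measurable and bounded. Conditioning on $X$ and using $\E[R\mid X]=0$ gives
\[
\E[(2p-1)R]=\E\bigl[(2r(X)-1)\,\E[R\mid X]\bigr]=0 .
\]
The right-hand side $2\tau\Vs$ is also zero for every $\tau$. So the moment equation is uninformative.

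\emph{Part (b): setup.} Fix $\tau'$ with $|\tau'|\le\bar\tau$. We keep the law of $(Y,X,p)$, enlarging the probability space by a $U\sim\mathrm{Unif}(0,1)$ independent of $(Y,X,p)$. We then set $G':=\1\{U\le q(Y,X)\}$ for a function $q$ to be chosen. Write $R=Y-m(X)$ and define the centred bounded transform
\[
h:=g(R)-\E[g(R)\mid X],
\]
which satisfies $|h|<2$ and $\E[h\mid X]=0$. Put $\kappa(X):=\E[Rg(R)\mid X]\ge\underline\kappa$. Because $\E[R\mid X]=0$, we also have $\E[Rh\mid X]=\kappa(X)$. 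We take
\[
q:=r(X)+s(X)\,h,\qquad s(X):=\frac{\tau'\,r(X)(1-r(X))}{\kappa(X)} .
\]

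\emph{Part (b): range of $q$.} Using $r(1-r)\le\tfrac14$ and $|\tau'|\le 2\underline r\,\underline\kappa$, we get $|s(X)|\le\underline r/2$. Therefore $|s h|<\underline r\le\min\{r(X),1-r(X)\}$, so $q\in[0,1]$ and $G'$ is well defined. I expect this range check to be the main obstacle. The choice of the bounded transform $g$, the centring of $h$, and the constant $\bar\tau=2\underline r\,\underline\kappa$ are calibrated precisely so that it goes through. Measurability of $s$ follows from measurability of the conditional expectations.

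\emph{Part (b): verifying the assumptions.} We have $\E[G'\mid Y,X]=q$, so $\E[G'\mid X]=r(X)$. Since $p=r(X)$ is a function of $X$, this gives $\E[G'\mid p,X]=r(X)=p$, which is Assumption~\ref{ass:cal}.

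For Assumption~\ref{ass:mean}, again because $p$ is $X$-measurable, $\E[Y\mid G',p,X]=\E[Y\mid G',X]$. As $G'$ is binary with $\Var(G'\mid X)=r(1-r)>0$, this conditional mean is automatically affine in $G'$, namely $\mu'(X)+\tau'(X)G'$. Its slope is
\[
\tau'(X)=\frac{\Cov(Y,G'\mid X)}{\Var(G'\mid X)}=\frac{s(X)\,\E[Rh\mid X]}{r(1-r)}=\frac{s(X)\,\kappa(X)}{r(1-r)}=\tau' ,
\]
which is constant. So Assumption~\ref{ass:mean} holds with coefficient $\tau'$.

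Assumption~\ref{ass:moments} concerns only the observables, so it is inherited. The joint law of $(Y,X,p)$ is unchanged by construction. Hence every $\tau'\in[-\bar\tau,\bar\tau]$ is observationally equivalent to the original model, which proves (b).
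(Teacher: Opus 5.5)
Your proof is correct and follows the paper's argument essentially step for step. It uses the same outcome-tilted membership probability $q=r(X)+s(X)\bigl(g(R)-\E[g(R)\mid X]\bigr)$ with the same constants and range check, and the same reduction $\sigma(p,X)=\sigma(X)$ to verify calibration and the structural mean; the only cosmetic difference is that you read off the coefficient as $\Cov(Y,G'\mid X)/\Var(G'\mid X)$, whereas the paper computes the two branch means $\E[Y\mid G'=1,X]=m(X)+\tau'(1-r(X))$ and $\E[Y\mid G'=0,X]=m(X)-\tau' r(X)$ explicitly, which is equivalent for binary $G'$.
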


Part (b) is the substantive non-identification claim, and the form of its
construction matters. A tempting shortcut---generating an alternative
indicator $G'$ from the score by independent randomisation and then
postulating the structural equation---fails: an indicator independent of
$Y$ given $(p,X)$ satisfies $\E[Y\mid G',p,X]=\E[Y\mid p,X]$, which cannot
depend on $G'$, so such a $G'$ can carry only the coefficient $\tau'=0$. A
valid alternative model must couple $G'$ to the outcome. The construction
in Appendix~\ref{app:prop-failure} does so by tilting the conditional
membership probability with a bounded, conditionally mean-zero transform of
the outcome residual: $G'$ is drawn with
$\Pbb(G'=1\mid Y,X)=r(X)+c(X)\,\psi(Y,X)$, where $\E[\psi\mid X]=0$
preserves conditional calibration (recall $p=r(X)$ when $\Vs=0$) and the
positive conditional covariance between $\psi$ and $Y$ generates exactly
the group gap $\tau'$, with $\mu'(X)=m(X)-\tau'r(X)$. The observables are
untouched, so observational equivalence is exact. The interval restriction
$|\tau'|\leq\bar\tau$ is not an artifact of the method: a latent binary
split of a fixed conditional outcome distribution can only support a mean
gap commensurate with that distribution's dispersion, so at the boundary
the identified set is a bounded set rather than all of $\R$; in degenerate
cases---for instance $\Var(Y\mid X)=0$ a.s.\ with $r(X)$ interior---it
collapses to $\{0\}$, and the nondegeneracy condition in part (b) is what
rules this out. Both conditions in part (b)---strict overlap of $r(X)$ and
the dispersion bound on
$\E[(Y-m(X))\,g(Y-m(X))\mid X]$---are restrictions on the observable joint
law of $(Y,X)$, so they are in principle verifiable from the data whose
identification content is being characterised.

\begin{remark}[Heterogeneous effects]
\label{rem:hetero}
The constant-coefficient restriction in Assumption~\ref{ass:mean} can be
relaxed without changing the argument. If
$\E[Y\mid G,p,X]=\mu(X)+\tau(X)\,G$ for a measurable function
$\tau:\mathcal{X}\to\R$ with $\E[\tau(X)^{2}]<\infty$, the same proof shows
that the moment equation~\eqref{eq:moment} holds with $\tau\Vs$ replaced by
$\E[\tau(X)\Var(p\mid X)]$, so the identified estimand becomes the
variance-weighted average
\[
  \bar\tau
  \;=\;
  \frac{\E\bigl[\tau(X)\,\Var(p\mid X)\bigr]}{\E\bigl[\Var(p\mid X)\bigr]},
\]
which reduces to $\tau$ under the constant-coefficient restriction. The
weight $\Var(p\mid X)$ is the local informativeness of the score at
covariate value $X$: cells where the score genuinely discriminates
contribute more. This is the exact analogue of the variance-weighting
familiar from linear IV and partially linear models under effect
heterogeneity, and the analyst should interpret the estimand accordingly
when constancy is implausible. Section~\ref{sec:mc-hetero} verifies the
weighting formula in simulation, including a design in which
$\bar\tau$ differs from $\E[\tau(X)]$ by construction.
\end{remark}

\section{The Structural Coefficient and the Marginal Gap}
\label{sec:structural}
\noindent
A natural question is whether $\tau$ equals the marginal latent mean gap
$\Delta_{\mathrm{marg}}:=\E[Y\mid G=1]-\E[Y\mid G=0]$. Under
Assumption~\ref{ass:mean}, a direct calculation gives
\begin{equation}
  \Delta_{\mathrm{marg}}=\tau+C,\qquad
  C:=\E[\mu(X)\mid G=1]-\E[\mu(X)\mid G=0].
  \label{eq:marg-decomp}
\end{equation}
The term $C$ captures differences in covariate composition across latent
groups. Although it depends on the latent conditional distributions
$\Pbb_{X\mid G=g}$, it is nevertheless point-identified: Assumption~\ref{ass:cal}
and the tower property give $\E[\mu(X)G]=\E[\mu(X)\,p]$ and
$\E[G]=\E[p]$, so
\[
  C=\frac{\E[\mu(X)\,p]}{\E[p]}-\frac{\E[\mu(X)(1-p)]}{\E[1-p]},
  \qquad \mu(X)=m(X)-\tau\,r(X),
\]
where $\mu$ is identified once $\tau$ is (Theorem~\ref{thm:moment}). The
same argument identifies the full latent-group covariate law,
$\Pbb(X\in A\mid G=1)=\E[\mathbf{1}_A(X)\,p]/\E[p]$: conditional
calibration pins down every moment of $X$ within each latent group.

\begin{corollary}[Structural coefficient versus marginal gap]
\label{cor:sv-marginal}
Under Assumptions~\ref{ass:mean}--\ref{ass:var}, the following are
equivalent: (i)~$\Delta_{\mathrm{marg}}=\tau$; (ii)~$C=0$;
(iii)~the latent groups are covariate-balanced,
$\E[\mu(X)\mid G=1]=\E[\mu(X)\mid G=0]$.
\end{corollary}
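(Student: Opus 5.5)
The plan is to make the decomposition~\eqref{eq:marg-decomp} rigorous and then read off the equivalences, since (ii)$\Leftrightarrow$(iii) is true by the definition of $C$. The only work is to show that $\Delta_{\mathrm{marg}}$ and $C$ are well defined, and that $\Delta_{\mathrm{marg}}=\tau+C$ holds exactly.

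\emph{Well-definedness.} The conditional expectations given $G=g$ need $\Pbb(G=1)\in(0,1)$. By Assumption~\ref{ass:cal} and the tower property, $\Pbb(G=1)=\E[p]$.

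\begin{itemize}
\item If $\E[p]\in\{0,1\}$, then $p$ is a.s.\ constant because $p\in[0,1]$.
\item Then $r(X)=p$ a.s., so $\Vs=0$, which contradicts Assumption~\ref{ass:var}.
\end{itemize}

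Hence $0<\E[p]<1$ and both latent groups have positive mass. Integrability of $\mu(X)$ comes from Lemma~\ref{lem:basic}: $\mu(X)=m(X)-\tau r(X)$, with $m(X)\in L^2$ by Assumption~\ref{ass:moments} and $r(X)$ bounded.

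\emph{Decomposition.} Take $g\in\{0,1\}$. Apply the tower property to Assumption~\ref{ass:mean}, conditioning on $\{G=g\}$, which is coarser than $\sigma(G,p,X)$:
\[
\E[Y\mid G=g]=\E\bigl[\mu(X)+\tau G\mid G=g\bigr]=\E[\mu(X)\mid G=g]+\tau g.
\]
Concretely, write $\E[Y\mathbf{1}\{G=g\}]=\E\bigl[\E[Y\mid G,p,X]\mathbf{1}\{G=g\}\bigr]$ and divide by $\Pbb(G=g)>0$. Taking the difference of the two cases gives $\Delta_{\mathrm{marg}}=\tau+C$ exactly.

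\emph{Equivalences.} From the decomposition, (i)$\Leftrightarrow$(ii) is immediate. Condition (iii) is the statement $C=0$ written out, so (ii)$\Leftrightarrow$(iii). For completeness, I would also note the identified form of $C$: $\E[\mu(X)G]=\E[\mu(X)\E[G\mid p,X]]=\E[\mu(X)p]$. This turns (iii) into the observable condition
\[
\frac{\E[\mu(X)p]}{\E[p]}=\frac{\E[\mu(X)(1-p)]}{1-\E[p]},
\]
which is equivalent to $\Cov(\mu(X),p)=0$, or equally $\Cov(\mu(X),r(X))=0$.

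The main obstacle is only the positivity step showing $\Pbb(G=1)\in(0,1)$. The rest is bookkeeping with the tower property.
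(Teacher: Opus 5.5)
Your proposal is correct and follows the paper's route: the paper likewise obtains $\Delta_{\mathrm{marg}}=\tau+C$ by a direct calculation under Assumption~\ref{ass:mean} and calls the corollary immediate from that decomposition. Two of your additions go beyond the paper: the check that $0<\E[p]<1$ (via $\Vs>0$), which makes the conditional means given $G=g$ well defined, is left implicit there, and your reduction of $C=0$ to $\Cov(\mu(X),r(X))=0$ is correct but not part of the paper's argument.
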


The proof is immediate from~\eqref{eq:marg-decomp}. The practical import is
that $\tau$ identifies the \emph{within-covariate-cell} group effect, while
$\Delta_{\mathrm{marg}}$ adds the compositional term $C$. Since $C$ is
identified in closed form by the display above, both estimands are
available from $(Y,X,p)$, and reporting $\tau$ or $\Delta_{\mathrm{marg}}$
is a choice of estimand rather than an identification issue; the
decomposition makes explicit what each one measures.

\section{Oracle Estimation and Inference}
\label{sec:inference}
\label{sec:oracle}
\noindent
Suppose for this section that $m$ and $r$ are known; feasible estimation
with estimated nuisance functions is treated in Section~\ref{sec:feasible}. Define the
\emph{oracle estimator}
\begin{equation}
  \tauor:=\frac{\frac{1}{n}\sum_{i}(2p_i-1)(Y_i-m(X_i))}
               {2\,\frac{1}{n}\sum_{i}(p_i-r(X_i))^{2}},
  \label{eq:oracle}
\end{equation}
and the score evaluated at the true parameter,
\begin{equation}
  \psi_i:=(2p_i-1)(Y_i-m(X_i))-2\tau(p_i-r(X_i))^{2}.
  \label{eq:score}
\end{equation}
By Theorem~\ref{thm:moment}, $\E[\psi_i]=0$.

\begin{theorem}[Oracle CLT]
\label{thm:clt}
Under Assumptions~\ref{ass:mean}--\ref{ass:moments} and i.i.d.\ sampling,
\begin{equation}
  \sqrt{n}\,(\tauor-\tau)\;\dto\;\N(0,\,\sigma^{2}_{\mathrm{or}}),
  \qquad
  \sigma^{2}_{\mathrm{or}}
  =\frac{\E[\psi_i^{2}]}{(2\Vs)^{2}}.
  \label{eq:clt}
\end{equation}
\end{theorem}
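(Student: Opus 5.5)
The plan is the standard ratio-estimator argument: linearize $\tauor-\tau$ around the population moments, apply the Lindeberg--L\'evy CLT to the score $\psi_i$, and use the law of large numbers plus Slutsky's lemma for the denominator.

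\emph{Step 1 (algebraic representation).} Write $N_n:=\frac1n\sum_i(2p_i-1)(Y_i-m(X_i))$ and $D_n:=\frac1n\sum_i(p_i-r(X_i))^2$, so that $\tauor=N_n/(2D_n)$. Subtracting $\tau$ gives the exact identity
\[
  \sqrt{n}\,(\tauor-\tau)=\frac{\frac{1}{\sqrt n}\sum_i\psi_i}{2D_n},
\]
which holds on the event $D_n>0$. Since $\Vs>0$ (Assumption~\ref{ass:var}) and $D_n\to\Vs$ a.s., this event has probability tending to one, and its complement can be ignored for weak convergence.

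\emph{Step 2 (numerator CLT).} By Theorem~\ref{thm:moment}(a), $\E[\psi_i]=0$. I must check $\E[\psi_i^2]<\infty$. Because $|2p-1|\le1$, we have $|(2p-1)(Y-m(X))|\le|Y|+|m(X)|$. This is square integrable since $\E[m(X)^2]\le\E[Y^2]$ by Jensen's inequality. Likewise $(p-r(X))^2\le1$ because $p,r\in[0,1]$, so that term is bounded. Hence $\psi_i$ is i.i.d. with finite variance, and the Lindeberg--L\'evy CLT gives $n^{-1/2}\sum_i\psi_i\dto\N(0,\E[\psi_i^2])$. These bounds need only second moments; Assumption~\ref{ass:moments} covers them more than adequately, and the fourth moments would be needed only for consistency of a plug-in variance estimator.

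\emph{Step 3 (denominator and Slutsky).} The terms $(p_i-r(X_i))^2$ are i.i.d., bounded, and have mean $\Vs$ by \eqref{eq:Vs}. The strong law then gives $2D_n\to2\Vs>0$. Slutsky's lemma combined with the continuous mapping theorem yields $\sqrt n(\tauor-\tau)\dto\N\bigl(0,\E[\psi_i^2]/(2\Vs)^2\bigr)$, which is \eqref{eq:clt}.

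The only delicate point is the division: $\tauor$ is undefined when $D_n=0$. I would handle this by defining $\tauor$ arbitrarily on that event and noting that $\Pbb(D_n=0)\to0$, so the event does not affect the limit. Beyond that, the argument is routine.
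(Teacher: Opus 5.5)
Your proof is correct, and it takes a more direct route than the paper's. The paper's steps are:
\begin{itemize}
\item write $\tauor=\bar U_n/(2\bar V_n)$ with $U_i=(2p_i-1)(Y_i-m(X_i))$ and $V_i=(p_i-r(X_i))^2$;
\item invoke the bivariate CLT for $(\bar U_n,\bar V_n)$;
\item apply the delta method to $f(u,v)=u/(2v)$, with gradient $(1/(2\Vs),-\tau/\Vs)$;
\item expand $(\nabla f)^{\top}\Sigma\,\nabla f$ in terms of $W_i=U_i-2\tau V_i=\psi_i$ and check that the cross-terms cancel.
\end{itemize}

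You instead use the exact identity $\tauor-\tau=\bar\psi_n/(2D_n)$, so the linearization is exact rather than first-order. The ``cancellation of cross-terms'' is then built in. You need only the univariate Lindeberg--L\'evy CLT for $\psi_i$, the strong law for the bounded $V_i$, and Slutsky.

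The paper's route fits the generic smooth-function-of-means template, which would adapt more readily if the estimator were a nonlinear function of several averages. Yours is shorter and avoids computing $\Sigma$. It also handles the event $\{D_n=0\}$ explicitly, which the paper leaves implicit.

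Your argument also makes transparent that only $\E[Y^2]<\infty$ is actually used, since $|2p-1|\le 1$, $(p-r(X))^2\le 1$ and $\E[m(X)^2]\le\E[Y^2]$. The same bound shows that the paper's bivariate CLT needs only second moments too. So Assumption~\ref{ass:moments} is stronger than either argument requires.

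Your aside that fourth moments are needed for a variance estimator is also too pessimistic, at least for the oracle one. Consistency of $\hat\sigma^2_{\mathrm{or}}$ in Corollary~\ref{cor:wald} follows from the WLLN for $\psi_i^2$, since $\hat\psi_i=\psi_i-2(\tauor-\tau)V_i$ with $V_i$ bounded. That step again needs only $\E[\psi_i^2]<\infty$.
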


The variance $\sigma^{2}_{\mathrm{or}}=J^{-2}\E[\psi_i^{2}]$ has the
standard sandwich form with Jacobian $J=2\Vs$. The identification
condition $\Vs>0$ is precisely the condition that $J\neq 0$, i.e., that the
moment equation is locally informative about $\tau$ in a neighbourhood of the
truth.

\begin{corollary}[Consistent variance estimator and Wald interval]
\label{cor:wald}
Let $\hat\psi_i:=(2p_i-1)(Y_i-m(X_i))-2\tauor(p_i-r(X_i))^{2}$. The
estimator
\begin{equation}
  \hat\sigma^{2}_{\mathrm{or}}
  :=\frac{\frac{1}{n}\sum_{i}\hat\psi_i^{2}}
         {\Bigl(2\,\frac{1}{n}\sum_{i}(p_i-r(X_i))^{2}\Bigr)^{2}}
  \label{eq:varhat}
\end{equation}
satisfies $\hat\sigma^{2}_{\mathrm{or}}\pto\sigma^{2}_{\mathrm{or}}$, and
$\tauor\pm z_{1-\alpha/2}\,\hat\sigma_{\mathrm{or}}/\sqrt{n}$ has asymptotic
coverage $1-\alpha$.
\end{corollary}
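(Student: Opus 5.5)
The plan is to show consistency of $\hat\sigma^{2}_{\mathrm{or}}$ by handling numerator and denominator separately and then applying the continuous mapping theorem. Coverage then follows from Theorem~\ref{thm:clt} and Slutsky's lemma. Write $A_i:=(2p_i-1)(Y_i-m(X_i))$ and $B_i:=(p_i-r(X_i))^{2}$, so that $\psi_i=A_i-2\tau B_i$ and $\hat\psi_i=A_i-2\tauor B_i$.

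\textbf{Denominator.} Since $|p|\le 1$, the $B_i$ are bounded. The weak law of large numbers then gives $\frac1n\sum_i B_i\pto\Vs$. By Assumption~\ref{ass:var} and continuous mapping, $\bigl(2\frac1n\sum_i B_i\bigr)^{2}\pto(2\Vs)^{2}>0$.

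\textbf{Numerator.} Expand
\[
\hat\psi_i^{2}=\psi_i^{2}-4(\tauor-\tau)\psi_iB_i+4(\tauor-\tau)^{2}B_i^{2}.
\]
This splits $\frac1n\sum_i\hat\psi_i^{2}$ into three averages, handled as follows.
\begin{enumerate}[label=\textup{(\roman*)}]
\item $\psi_i^{2}\le 2A_i^{2}+8\tau^{2}B_i^{2}$. Also $A_i^{2}\le (Y_i-m(X_i))^{2}$, which is integrable under Assumption~\ref{ass:moments} because $\E[m(X)^2]\le\E[Y^2]$ by Jensen. So $\E[\psi_i^{2}]<\infty$, and the WLLN gives $\frac1n\sum_i\psi_i^{2}\pto\E[\psi_i^{2}]$.
\item By Cauchy--Schwarz, $\frac1n\sum_i|\psi_iB_i|\le\bigl(\frac1n\sum_i\psi_i^{2}\bigr)^{1/2}$, since $B_i\le1$. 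This is $O_p(1)$.
\item $\frac1n\sum_i B_i^{2}\le 1$.
\end{enumerate}
Theorem~\ref{thm:clt} gives $\tauor-\tau=O_p(n^{-1/2})=o_p(1)$. Hence the cross term and the quadratic term are both $o_p(1)$, and $\frac1n\sum_i\hat\psi_i^{2}\pto\E[\psi_i^{2}]$. Dividing by the denominator yields $\hat\sigma^{2}_{\mathrm{or}}\pto\sigma^{2}_{\mathrm{or}}$.

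\textbf{Coverage.} First I need $\sigma^{2}_{\mathrm{or}}>0$. If $\E[\psi_i^2]=0$, the limit in Theorem~\ref{thm:clt} is degenerate and the statement must be read with that caveat. I expect this to be the only subtle point. I would either add the mild condition $\E[\psi_i^{2}]>0$ or note that it holds whenever $\Var(\psi_i)>0$. With $\sigma_{\mathrm{or}}>0$, continuous mapping gives $\hat\sigma_{\mathrm{or}}\pto\sigma_{\mathrm{or}}$. Slutsky then gives $\sqrt n(\tauor-\tau)/\hat\sigma_{\mathrm{or}}\dto\N(0,1)$, so
\[
\Pbb\bigl(|\sqrt n(\tauor-\tau)/\hat\sigma_{\mathrm{or}}|\le z_{1-\alpha/2}\bigr)\to1-\alpha,
\]
which is exactly the claimed asymptotic coverage of the Wald interval.

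The remaining steps are routine. The main technical point is controlling the plug-in error in $\hat\psi_i$, and the boundedness of $p$ reduces that to Cauchy--Schwarz.
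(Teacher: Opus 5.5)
Your proof is correct and follows the paper's route: the WLLN for numerator and denominator, consistency of $\tauor$ to absorb the plug-in error in $\hat\psi_i$, then Slutsky together with Theorem~\ref{thm:clt} for coverage; your explicit expansion of $\hat\psi_i^{2}$ around $\psi_i$ fills in what the paper's one-line proof leaves implicit. Your caveat that $\sigma^{2}_{\mathrm{or}}>0$ is needed for coverage of exactly $1-\alpha$ is a genuine refinement that the paper omits. If $\E[\psi_i^{2}]=0$ then $\psi_i=0$ a.s., so $\tauor=\tau$ whenever $\sum_i(p_i-r(X_i))^{2}>0$ and the interval is conservative rather than invalid; and since $\E[\psi_i]=0$, your two proposed conditions are the same condition.
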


Proofs of both results are in Appendix~\ref{app:inference}. The proof of the
CLT proceeds by applying the delta method to $f(u,v)=u/(2v)$ after the
bivariate CLT for $(\bar U_n,\bar V_n)$; the cross-terms in the delta-method
expansion cancel when expressed in terms of the centred score $\psi_i$, leaving
the clean formula~\eqref{eq:clt}.

\section{Robustness to Calibration Failure}
\label{sec:robustness}
\noindent
Suppose Assumption~\ref{ass:cal} is violated and
\begin{equation}
  \E[G\mid p,X]=p+\eta(p,X)
  \label{eq:miscal-model}
\end{equation}
for a measurable calibration error function $\eta$. The next result gives
the exact probability limit of the oracle estimator under
\eqref{eq:miscal-model} and the sharp worst-case bias over all admissible
calibration error functions of a given magnitude, where admissibility
requires $p+\eta$ to remain a valid conditional probability.

\begin{proposition}[Bias under calibration failure and sharp sensitivity
  bound]
\label{prop:robust}
Suppose Assumption~\ref{ass:mean} holds, \eqref{eq:miscal-model} holds, and
$\Vs>0$.
\begin{enumerate}[label=\textup{(\alph*)}]
  \item Then
\begin{equation}
  \plim\;\tauor=\tau+B_{\mathrm{cal}},\qquad
  B_{\mathrm{cal}}:=\frac{\tau\,\E[z\,\eta(p,X)]}{2\Vs}.
  \label{eq:bias}
\end{equation}
  \item Fix $\delta\in[0,\tfrac12]$ and let
    $\mathcal{H}_\delta:=\{\eta:\;|\eta|\leq\delta\;\text{and}\;
    0\leq p+\eta\leq 1\;\text{a.s.}\}$ be the class of admissible
    calibration errors of magnitude $\delta$ (admissibility ensures that
    $p+\eta$ is a valid conditional probability). Then
\begin{equation}
  \sup_{\eta\in\mathcal{H}_\delta}
  \left|\plim\;\tauor-\tau\right|
  =|\tau|\cdot\frac{\delta\,\E[|z|]}{2\Vs},
  \label{eq:sensitivity}
\end{equation}
    and the supremum is attained at $\eta^{*}(p,X)=-\delta\,\sgn(z)$.
\end{enumerate}
\end{proposition}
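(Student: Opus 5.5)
The plan is to compute the probability limit of numerator and denominator of \eqref{eq:oracle} separately by the law of large numbers, and then bound the resulting bias term by a pointwise H\"older-type argument.

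\emph{Part (a).} The denominator $\frac1n\sum_i(p_i-r(X_i))^2$ converges to $\Vs>0$ by the weak law of large numbers, since Assumption~\ref{ass:moments} gives integrability. It does not involve $G$, so it is unaffected by $\eta$. The numerator converges to $\E[z(Y-m(X))]$. I would evaluate this by iterated expectations, conditioning on $(p,X)$, where $z$ is measurable:
\[
\E[Y\mid p,X]=\E\bigl[\E[Y\mid G,p,X]\mid p,X\bigr]=\mu(X)+\tau\bigl(p+\eta(p,X)\bigr)
\]
by Assumption~\ref{ass:mean} and \eqref{eq:miscal-model}. Here the oracle centring $m$ must be pinned down. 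I will take it to be the structural centring $m(X)=\mu(X)+\tau r(X)$ of Lemma~\ref{lem:basic}, i.e.\ the function the oracle is built around in Section~\ref{sec:inference}. Then
\[
\E[z(Y-m(X))]=\tau\,\E[z\,a]+\tau\,\E[z\,\eta].
\]
Since $\E[a\mid X]=0$, we have $\E[za]=2\E[pa]=2\E[a^2]=2\Vs$, exactly as in Remark~\ref{rem:instrument}. The continuous mapping theorem applied to the ratio then gives $\plim\tauor=\tau+\tau\E[z\eta]/(2\Vs)$, which is \eqref{eq:bias}.

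This identification of $m$ is the step I expect to need the most care. If $m$ were instead the true $\E[Y\mid X]$ under miscalibration, that function equals $\mu+\tau(r+\E[\eta\mid X])$. The same computation would then yield $\tau\,\E[(z-\E[z\mid X])\eta]$ in place of $\tau\E[z\eta]$, and the two agree only when $\E[(2r(X)-1)\eta]=0$. So I would state the convention explicitly, namely the oracle uses the calibrated-model centring $\mu+\tau r$, before doing the algebra.

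\emph{Part (b).} By part (a), $|\plim\tauor-\tau|=|\tau|\,|\E[z\eta]|/(2\Vs)$. For any $\eta\in\mathcal H_\delta$ we have $|\E[z\eta]|\le\E[|z||\eta|]\le\delta\,\E[|z|]$, which is the upper bound.

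For attainment, I would check that $\eta^*=-\delta\,\sgn(z)$ is admissible:
\begin{itemize}[label={}]
\item if $z>0$, then $p>\tfrac12$, so $p-\delta\in[0,1]$ because $\delta\le\tfrac12$;
\item if $z<0$, then $p+\delta\le\tfrac12+\delta\le1$;
\item if $z=0$, then $\eta^*=0$.
\end{itemize}
This sign choice is what makes admissibility work; $+\delta\,\sgn(z)$ would push $p$ near $1$ above $1$. Then $\E[z\eta^*]=-\delta\,\E[|z|]$, so the absolute bias equals the bound and the supremum is attained, giving \eqref{eq:sensitivity}.
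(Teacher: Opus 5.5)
Your proof is correct and follows essentially the same route as the paper. Part (a) computes $\E[zR]$ by conditioning on $(p,X)$ and using $\E[za]=2\Vs$, and part (b) uses the same H\"older bound and the same admissibility check for $\eta^{*}=-\delta\,\sgn(z)$. You also state explicitly that the oracle centres at $\mu(X)+\tau r(X)$; the paper's proof uses this convention only implicitly, when it reuses Step~1 of Theorem~\ref{thm:moment} through Lemmas~\ref{lem:basic}--\ref{lem:residual}, which rely on calibration, so flagging it is a genuine improvement in rigour rather than a deviation.
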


The bias $B_{\mathrm{cal}}$ is proportional to $\tau$: no bias arises when
the true effect is zero, regardless of miscalibration. It is proportional to
the score-weighted mean of the calibration error; it vanishes whenever
$\E[z\eta]=0$, which holds when the miscalibration is symmetric
in the sense of being orthogonal to the signed score.

The bound in~\eqref{eq:sensitivity} has a clean signal-to-noise
interpretation. The denominator $2\Vs/\E[|z|]$ is an effective informativeness
measure of the score; larger $\Vs$ means the score is more discriminating, and
the same calibration error $\delta$ produces proportionally less bias. As
$\Vs\to 0$, the bound diverges, consistently with the identification failure of
Proposition~\ref{prop:failure}; conversely, small-$\Vs$ designs are exactly
where calibration error is most damaging, the quantitative content of the
sharpness-versus-calibration discussion in Remark~\ref{rem:sharp}. Proofs
are in Appendix~\ref{app:robust}.

\section{Feasible Estimation and Neyman Orthogonality}
\label{sec:feasible}
\label{sec:plugin}
\noindent
When $m$ and $r$ are unknown, replace them with estimators $\hat m$ and
$\hat r$ to obtain
\begin{equation}
  \hat\tau:=\frac{\frac{1}{n}\sum_{i}(2p_i-1)(Y_i-\hat m(X_i))}
                 {2\,\frac{1}{n}\sum_{i}(p_i-\hat r(X_i))^{2}}.
  \label{eq:plugin}
\end{equation}
The denominator stability under nuisance estimation error is non-trivial and
is isolated as a separate lemma.

\begin{lemma}[Denominator stability]
\label{lem:denom}
If $p_i,\hat r(X_i)\in[0,1]$ a.s.\ and the estimated regression
satisfies $n^{-1}\sum_i(\hat r(X_i)-r(X_i))^{2}\pto 0$, then
\[
  n^{-1}\textstyle\sum_i(p_i-\hat r(X_i))^{2}
  -n^{-1}\textstyle\sum_i(p_i-r(X_i))^{2}\pto 0.
\]
\end{lemma}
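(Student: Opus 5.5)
The plan is to expand the difference algebraically and bound each piece with Cauchy--Schwarz. Write $a_i:=p_i-r(X_i)$ and $d_i:=\hat r(X_i)-r(X_i)$, so that $p_i-\hat r(X_i)=a_i-d_i$. Then
\[
  n^{-1}\textstyle\sum_i(p_i-\hat r(X_i))^{2}-n^{-1}\textstyle\sum_i(p_i-r(X_i))^{2}
  =-2\,n^{-1}\textstyle\sum_i a_i d_i+n^{-1}\textstyle\sum_i d_i^{2}.
\]
The second term tends to zero in probability directly by hypothesis. For the cross term, Cauchy--Schwarz gives
\[
  \Bigl|n^{-1}\textstyle\sum_i a_i d_i\Bigr|\leq
  \Bigl(n^{-1}\textstyle\sum_i a_i^{2}\Bigr)^{1/2}
  \Bigl(n^{-1}\textstyle\sum_i d_i^{2}\Bigr)^{1/2}.
\]
So it suffices to show that $n^{-1}\sum_i a_i^{2}=O_p(1)$.

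This boundedness is immediate from the support assumption. Since $p_i\in[0,1]$, the conditional mean $r(X_i)=\E[p\mid X_i]$ also lies in $[0,1]$, so $a_i^{2}\leq 1$ deterministically and $n^{-1}\sum_i a_i^{2}\leq 1$. Alternatively, the weak law of large numbers gives $n^{-1}\sum_i a_i^2\pto\Vs<\infty$ using Assumption~\ref{ass:moments}, but the deterministic bound makes that unnecessary. The product of an $O_p(1)$ term with an $o_p(1)$ term is $o_p(1)$, so the cross term vanishes and the claim follows.

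I expect no real obstacle here. The only point needing care is that no independence between $\hat r$ and the evaluation sample is required: the argument is purely pathwise Cauchy--Schwarz on the realised sample, so it applies whether or not cross-fitting is used. The condition $\hat r(X_i)\in[0,1]$ is not even essential for this route, though it would give the cruder uniform bound $|p_i-\hat r(X_i)|\leq1$ if one preferred to factor the difference as $(r(X_i)-\hat r(X_i))(2p_i-\hat r(X_i)-r(X_i))$ and bound the second factor by $2$. In that alternative version the result follows from $n^{-1}\sum_i|d_i|\leq(n^{-1}\sum_i d_i^2)^{1/2}\pto0$. I would present this factorisation version as the main line, since it uses both hypotheses exactly as stated.
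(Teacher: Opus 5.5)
Your proposal is correct, and the factorisation version you adopt as the main line is exactly the paper's proof: write the difference as $\delta_i\bigl(\hat r(X_i)+r(X_i)-2p_i\bigr)$, bound the second factor by $2$, and finish with $n^{-1}\sum_i|\delta_i|\le\bigl(n^{-1}\sum_i\delta_i^{2}\bigr)^{1/2}$. Your alternative expansion route, via $-2n^{-1}\sum_i a_id_i+n^{-1}\sum_i d_i^{2}$ with $|a_i|\le 1$, is also valid. As you note, it does not need the hypothesis $\hat r(X_i)\in[0,1]$.
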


\begin{proposition}[Plug-in consistency]
\label{prop:plugin}
Under Assumptions~\ref{ass:mean}--\ref{ass:moments}, i.i.d.\ sampling,
$p_i,\hat r(X_i)\in[0,1]$ a.s., and
\[
  n^{-1}\textstyle\sum_i(\hat m(X_i)-m(X_i))^{2}\pto 0,\qquad
  n^{-1}\textstyle\sum_i(\hat r(X_i)-r(X_i))^{2}\pto 0,
\]
we have $\hat\tau\pto\tau$.
\end{proposition}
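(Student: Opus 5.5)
The plan is to write $\hat\tau=\hat N/(2\hat D)$ and $\tauor=N_n/(2D_n)$. Here $N_n:=n^{-1}\sum_i(2p_i-1)(Y_i-m(X_i))$ and $D_n:=n^{-1}\sum_i(p_i-r(X_i))^{2}$ are the oracle numerator and denominator. The hatted versions replace $m,r$ by $\hat m,\hat r$. The argument treats numerator and denominator separately and then applies the continuous mapping theorem to $(u,v)\mapsto u/(2v)$, which is continuous at $(2\tau\Vs,\Vs)$ because $\Vs>0$ by Assumption~\ref{ass:var}.

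\textbf{Oracle limits.} By i.i.d.\ sampling and the weak law of large numbers, $N_n\pto\E[(2p-1)(Y-m(X))]$ and $D_n\pto\Vs$. The needed integrability holds: $|2p-1|\le 1$, $Y$ is square-integrable by Assumption~\ref{ass:moments}, and $m(X)$ is square-integrable by Jensen. Theorem~\ref{thm:moment}(a) identifies the first limit as $2\tau\Vs$.

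\textbf{Denominator.} Lemma~\ref{lem:denom} applies directly under the stated conditions $p_i,\hat r(X_i)\in[0,1]$ and $n^{-1}\sum_i(\hat r(X_i)-r(X_i))^2\pto0$. It gives $\hat D-D_n\pto0$, and hence $\hat D\pto\Vs$.

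\textbf{Numerator.} Write
\[
\hat N-N_n=-\frac1n\sum_i(2p_i-1)\bigl(\hat m(X_i)-m(X_i)\bigr).
\]
By Cauchy--Schwarz and $|2p_i-1|\le1$,
\[
|\hat N-N_n|\le\Bigl(\frac1n\sum_i(\hat m(X_i)-m(X_i))^2\Bigr)^{1/2}\pto0,
\]
so $\hat N\pto2\tau\Vs$.

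Combining the two pieces, $(\hat N,\hat D)\pto(2\tau\Vs,\Vs)$. Continuous mapping then gives $\hat\tau\pto\tau$. Since $\Vs>0$, the event $\hat D>\Vs/2$ has probability tending to one, so $\hat\tau$ is well defined with probability approaching one.

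There is no real obstacle here. The only delicate point is the denominator, where the cross term $n^{-1}\sum_i(p_i-r)(r-\hat r)$ involves data-dependent $\hat r$, and this is exactly what Lemma~\ref{lem:denom} has already isolated. If that lemma were not available, the same Cauchy--Schwarz bound would handle the cross term, using the boundedness of $p-r$ in $[-1,1]$.
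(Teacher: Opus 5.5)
Your proposal is correct and follows essentially the same route as the paper. You split the numerator into the oracle average, handled by the WLLN and Theorem~\ref{thm:moment}(a), plus a Cauchy--Schwarz remainder in $\hat m-m$; you handle the denominator with Lemma~\ref{lem:denom} and the WLLN, then finish with continuous mapping, and your note that $\hat D>\Vs/2$ with probability tending to one (so $\hat\tau$ is eventually well defined) is a detail the paper leaves implicit.
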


Proofs are in Appendix~\ref{app:plugin}.

For $\sqrt{n}$-normality of $\hat\tau$ with nuisances estimated at nonparametric
rates, the score~\eqref{eq:score} must be Neyman-orthogonal
\citep{chernozhukov2018}. Appendix~\ref{app:ortho} verifies that the
$r$-Gateaux derivative of $\E[\psi]$ is already zero (because $\E[a\mid X]=0$),
while the $m$-Gateaux derivative equals $-\E[(2r(X)-1)\delta_m(X)]$, which is
non-zero whenever $r(X)\not\equiv\frac12$. The score therefore fails Neyman
orthogonality through its $m$-direction.

Appendix~\ref{app:ortho} also identifies a natural Neyman-orthogonal
reformulation. Replacing $(2p-1)$ by $2(p-r(X))=2a$ in the numerator gives
the score
\begin{equation}
  \tilde\psi_i:=2\,(p_i-r(X_i))\bigl(Y_i-m(X_i)-\tau\,(p_i-r(X_i))\bigr),
  \label{eq:ort-score}
\end{equation}
which has both Gateaux derivatives equal to zero. The estimator defined by
solving $\frac{1}{n}\sum_i\tilde\psi_i(\hat\tau)=0$ is
\begin{equation}
  \hat\tau_{\mathrm{ort}} :=
  \frac{\frac{1}{n}\sum_i(p_i-\hat r(X_i))(Y_i-\hat m(X_i))}
       {\frac{1}{n}\sum_i(p_i-\hat r(X_i))^2},
  \label{eq:ort-estimator}
\end{equation}
which is a distinct estimator from~\eqref{eq:plugin}. When nuisances are
known, both estimators converge to $\tau$ and are asymptotically equivalent;
with estimated nuisances, $\hat\tau_{\mathrm{ort}}$ is the natural candidate
for DML-compatible inference.

Formal inference for the cross-fitted estimator based on
\eqref{eq:ort-score} follows from the double-machine-learning framework of
\citet{chernozhukov2018}: the score is linear in $\tau$,
Appendix~\ref{app:ortho} proves its Neyman orthogonality in both nuisance
directions, and under the standard rate conditions on the nuisance
estimators (each consistent in $L_2(P)$ at rate $o_P(n^{-1/4})$, with
cross-fitting) the generic DML central limit theorem applies to
$\hat\tau_{\mathrm{ort}}$ in~\eqref{eq:ort-estimator}, delivering
$\sqrt{n}$-normality with the sandwich variance evaluated at the orthogonal
score. We do not restate that argument; the Monte Carlo coverage results
for the orthogonal estimator reported in Section~\ref{sec:mc} are
consistent with the theory rather than purely exploratory. Two caveats keep
the scope honest: this result, like Theorem~\ref{thm:clt}, treats the
score-generating mechanism as externally supplied and conditions on it, and
it delivers no uniformity over $\Vs\to 0$; inference local to the
identification boundary is an open problem (Section~\ref{sec:discussion}).

\section{Monte Carlo Evidence}
\label{sec:mc}
\noindent
We report five sets of simulations, each tied directly to a theoretical result.
All experiments use $R=2{,}000$ replications with seeded random draws.
The baseline DGP has $X\in\R^3$ with independent standard normal entries,
$r(X)=\sigma(\beta_r^\top X)$ (logistic), and structural baseline
$\mu(X)=\beta_m^\top X$ (linear).
The score is drawn as $p\mid X\sim\mathrm{Beta}(r(X)\,c,\,(1-r(X))\,c)$
with concentration $c=(1-\sigma_u^2)/\sigma_u^2$ (the symbol $\kappa$ is
reserved for the attenuation factor), so that $\E[p\mid X]=r(X)$ and
$\Var(p\mid X)=\sigma_u^2\,r(X)(1-r(X))$ exactly, giving
$\Vs = \sigma_u^2\,\E[r(X)(1-r(X))]$ exactly (not an approximation).
The outcome is $Y=\mu(X)+\tau G+\varepsilon$ with
$G\sim\mathrm{Bernoulli}(p)$ and $\varepsilon\sim N(0,1)$; by
Lemma~\ref{lem:basic} the induced conditional mean is
$m(X)=\E[Y\mid X]=\mu(X)+\tau r(X)$, and it is this $m(X)$---not
$\mu(X)$---that the oracle estimator subtracts (the replication code
constructs $m$ accordingly). Because $G$ is drawn from its score,
$\E[G\mid p,X]=p$ holds \emph{exactly} at every noise level $\sigma_u$: the
design realises the sharpness--calibration separation of
Remark~\ref{rem:sharp}, with $\sigma_u$ moving sharpness (and $\Vs$) while
calibration stays exact.
The oracle estimator uses the true nuisance functions $m(X)$ and $r(X)$;
the plug-in estimator fits degree-2 polynomial ridge regressions without
cross-fitting; the orthogonal estimator uses 5-fold cross-fitting with the
same ridge models; and the hard-threshold estimator replaces $p$ with
$\1\{p>\frac12\}$.
Full replication code is provided in the online supplement.

\subsection{Finite-sample performance and oracle normality}
\label{sec:mc-baseline}

\begin{sloppypar}
\noindent
Table~\ref{tab:sim1} reports bias, standard deviation, RMSE, and empirical
coverage of nominal 95\% Wald intervals for the three main estimators at
$n\in\{500,1{,}000,5{,}000\}$ with $\tau=1$ and $\sigma_u=0.30$. The oracle
estimator is approximately unbiased throughout; the plug-in estimator
exhibits a persistent positive bias of roughly 0.12--0.17, attributable to
regularisation bias of the in-sample ridge fit: the shrunken $\hat m$
under-partials the score-dependent component $\tau\,r(X)$ of
$m(X)=\mu(X)+\tau r(X)$ (Lemma~\ref{lem:basic}), and the leftover
$\E[z\,(m-\hat m)]>0$ inflates the numerator; the orthogonal
estimator, which uses 5-fold cross-fitting, is nearly unbiased and achieves
coverage close to the nominal 0.95. RMSE shrinks at the $\sqrt{n}$ rate for
all three estimators.
\end{sloppypar}

Two features of Table~\ref{tab:sim1} deserve comment, since at first sight
they may look anomalous: at $n=500$ and $n=1{,}000$ the biased plug-in
estimator has \emph{smaller} RMSE than the unbiased oracle, and the
orthogonal estimator's standard deviation is smaller than the oracle's at
every $n$. Neither is an error, and neither contradicts the theory, because
the oracle is efficient only within the class of estimators built on
\emph{its own score}: no result here (or in general) implies that an
estimator using the true nuisance functions dominates one using estimated
nuisances when the scores differ---indeed the analogous phenomenon, that
estimated nuisances can strictly reduce asymptotic variance relative to
known ones, is classical \citep{hirano2003}. Mechanically, the decomposition
$\mathrm{RMSE}^{2}=\mathrm{bias}^{2}+\mathrm{SD}^{2}$ in the table shows what
happens: at $n=500$ the plug-in trades a bias of $0.170$
($\mathrm{bias}^{2}=0.029$) for a variance reduction from $0.501^{2}=0.251$
to $0.359^{2}=0.129$, a favourable trade that reverses by $n=5{,}000$ as
variance shrinks and the non-vanishing bias comes to dominate. The variance
gap itself has a structural source: writing $z=2a+(2r(X)-1)$, the oracle
score decomposes as $\psi=\tilde\psi+(2r(X)-1)R$, where $\tilde\psi$ is the
orthogonal score~\eqref{eq:ort-score}; the extra term $(2r(X)-1)R$ is pure
noise (it has mean zero conditional on $X$ and is essentially uncorrelated
with $\tilde\psi$) and accounts for more than half of $\Var(\psi)$ in this
design. Estimators that residualise on $X$---explicitly, as the orthogonal
estimator does, or implicitly, as the plug-in's fitted $\hat m$ does
in-sample---strip most of this component out. Evaluating the two asymptotic
standard deviations $\smash{\sqrt{\Var(\psi)}/(2\Vs)}$ and
$\smash{\sqrt{\Var(\tilde\psi)}/(2\Vs)}$ under the DGP reproduces the
oracle and orthogonal columns of Table~\ref{tab:sim1} to within Monte Carlo
error ($0.508$ vs.\ $0.501$ and $0.343$ vs.\ $0.329$ at $n=500$, and
analogously at larger $n$), confirming that the ordering is structural
rather than a finite-sample artifact.

Figure~\ref{fig:qq} shows normal QQ-plots of the standardised oracle estimates
$\sqrt{n}(\hat\tau_{\mathrm{or}}-\tau)/\hat\sigma_{\mathrm{or}}$ at each
sample size. The agreement with the $N(0,1)$ reference is excellent at
$n=1{,}000$ and $n=5{,}000$, confirming Theorem~\ref{thm:clt}.

\begin{table}[H]
\centering
\caption{Finite-sample performance under correct specification ($\tau=1$,
$\sigma_u=0.30$, $R=2{,}000$ replications). The orthogonal estimator uses
5-fold cross-fitting; the plug-in estimator fits its ridge nuisances
in-sample, and its positive bias is the resulting regularisation bias. The
plug-in's smaller RMSE at $n\leq 1{,}000$ and the orthogonal estimator's
uniformly smaller SD are structural, not anomalous; see the discussion
following the table.}
\label{tab:sim1}
\begin{threeparttable}
\begin{tabular}{llrrrr}
\toprule
Estimator & $n$ & Bias & SD & RMSE & Coverage \\ \midrule
Oracle     & 500   & $+$0.004 & 0.501 & 0.501 & 0.952 \\
Plug-in    & 500   & $+$0.170 & 0.359 & 0.397 & 0.987 \\
Orthogonal & 500   & $-$0.014 & 0.329 & 0.329 & 0.955 \\ \midrule
Oracle     & 1,000 & $-$0.016 & 0.365 & 0.365 & 0.945 \\
Plug-in    & 1,000 & $+$0.150 & 0.252 & 0.294 & 0.983 \\
Orthogonal & 1,000 & $-$0.010 & 0.233 & 0.233 & 0.958 \\ \midrule
Oracle     & 5,000 & $-$0.005 & 0.162 & 0.163 & 0.946 \\
Plug-in    & 5,000 & $+$0.119 & 0.114 & 0.165 & 0.956 \\
Orthogonal & 5,000 & $-$0.005 & 0.110 & 0.110 & 0.946 \\
\bottomrule
\end{tabular}
\begin{tablenotes}\small
\item \textit{Notes}: Coverage is the empirical frequency of nominal 95\%
Wald intervals over $R=2{,}000$ replications.
\end{tablenotes}
\end{threeparttable}
\end{table}

\begin{figure}[H]
\centering
\includegraphics[width=\textwidth]{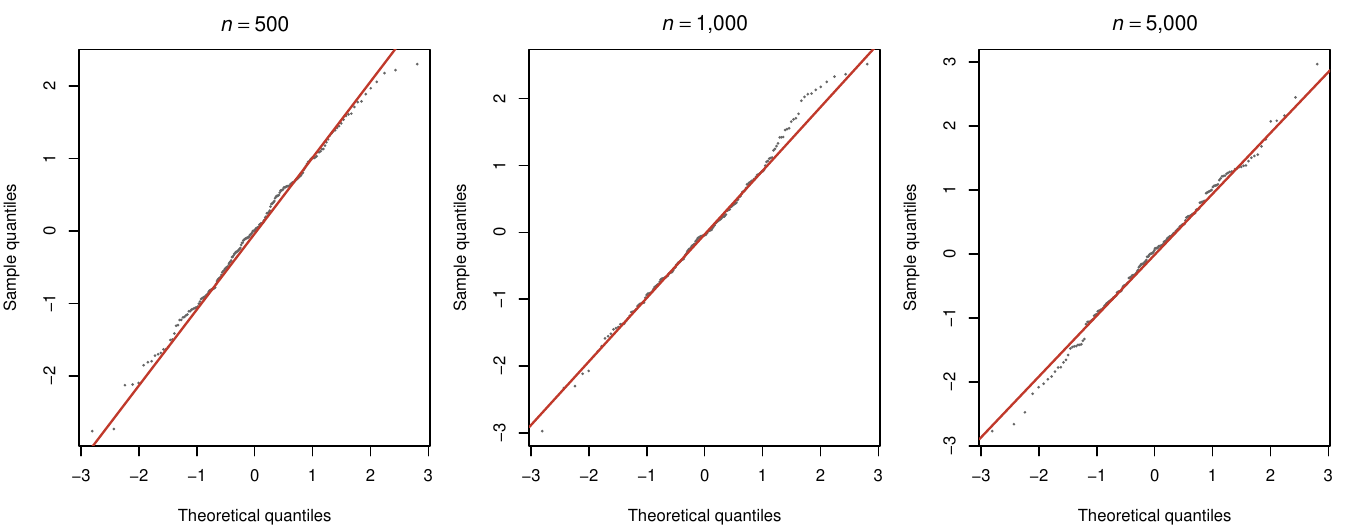}
\caption{Normal QQ-plots of standardised oracle estimates
$\sqrt{n}(\hat\tau_{\mathrm{or}}-\tau)/\hat\sigma_{\mathrm{or}}$
at $n\in\{500,1{,}000,5{,}000\}$. Confirming Theorem~\ref{thm:clt}.}
\label{fig:qq}
\end{figure}

\subsection{Approach to the identification boundary}
\label{sec:mc-boundary}

\noindent
Proposition~\ref{prop:failure} predicts that the estimator is not identified
when $\Vs=0$ and that RMSE diverges as $\Vs\to 0$.
Table~\ref{tab:sim2} traces this by decreasing score noise $\sigma_u$ at
$n=1{,}000$. As $\Vs$ falls from $5.6\times 10^{-2}$ to $2.3\times 10^{-7}$,
RMSE grows by five orders of magnitude, while coverage remains close to
its nominal level throughout---the widening confidence intervals correctly
track the growing variance. Figure~\ref{fig:boundary} plots RMSE on a
log-log scale (left) and CI coverage (right); the empirical RMSE tracks
the theoretical $\mathrm{RMSE}\propto 1/\Vs$ reference closely.

\begin{table}[H]
\centering
\caption{Approach to the identification boundary ($n=1{,}000$, $\tau=1$,
$R=2{,}000$ replications). Oracle estimator.}
\label{tab:sim2}
\begin{threeparttable}
\begin{tabular}{rrrrrr}
\toprule
$\sigma_u$ & True $\Vs$ & Bias & SD & RMSE & Coverage \\ \midrule
0.500 & $5.63\times 10^{-2}$ & $-$0.001 & 0.164 & 0.164 & 0.957 \\
0.250 & $1.41\times 10^{-2}$ & $+$0.003 & 0.475 & 0.475 & 0.956 \\
0.100 & $2.25\times 10^{-3}$ & $+$0.044 & 2.578 & 2.578 & 0.953 \\
0.050 & $5.63\times 10^{-4}$ & $-$0.318 & 9.559 & 9.562 & 0.952 \\
0.010 & $2.25\times 10^{-5}$ & $+$6.363 & 242.8 & 242.8 & 0.946 \\
0.005 & $5.61\times 10^{-6}$ & $+$17.11 & 978.3 & 978.2 & 0.952 \\
0.001 & $2.25\times 10^{-7}$ & $-$158.3 & 24858 & 24853 & 0.947 \\
\bottomrule
\end{tabular}
\begin{tablenotes}\small
\item \textit{Notes}: $\Vs=\E[(p-r(X))^2]$. SD, RMSE, and coverage are
computed on finite estimates only.
\end{tablenotes}
\end{threeparttable}
\end{table}

\begin{figure}[H]
\centering
\includegraphics[width=\textwidth]{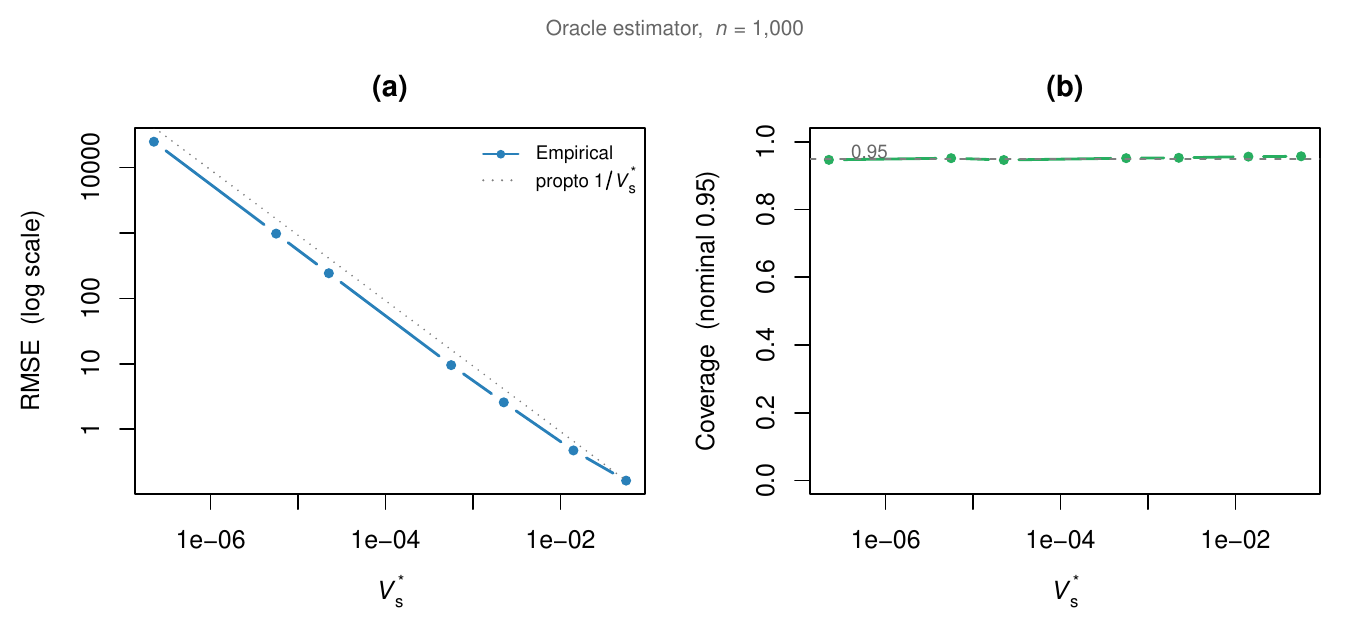}
\caption{Identification boundary. Left: empirical RMSE (solid) and the
theoretical $1/\Vs$ reference (dashed) on a log-log scale. Right:
CI coverage as $\Vs\to 0$. Oracle estimator, $n=1{,}000$.
Confirms Proposition~\ref{prop:failure}.}
\label{fig:boundary}
\end{figure}

\subsection{Calibration failure and the sensitivity bound}
\label{sec:mc-calibration}

\noindent
Proposition~\ref{prop:robust} characterises bias
under miscalibration and shows that the bound
$|\tau|\,\delta\,\E[|z|]/(2\Vs)$ is sharp over $\mathcal{H}_\delta$.
Table~\ref{tab:sim3} and Figure~\ref{fig:bias} evaluate three calibration
error shapes at four values of $\delta$, with $n=2{,}000$ and $\tau=1$.
The simulated \emph{worst-case} shape is the bias-amplifying extremal
$\eta=\delta\,\mathrm{sgn}(2p-1)$, which the DGP truncates wherever
$p+\eta$ would exit $[0,1]$; per the proof of
Proposition~\ref{prop:robust}(b), its attainable bias is
$|\tau|\,\E[|z|\min(\delta,\min(p,1-p))]/(2\Vs)$ rather than the unclipped
bound. The tightness ratios in Table~\ref{tab:sim3}, falling from 0.99 to
0.86 as $\delta$ grows, match this truncation prediction to within Monte
Carlo error (the predicted ratios under the design's score distribution
are 0.99, 0.96, 0.92, and 0.86): the shortfall is truncation, not
sampling noise, and the admissible attenuating extremal
$-\delta\,\mathrm{sgn}(2p-1)$ attains the bound exactly. The
\emph{symmetric} shape
$\eta=\delta\sin(\pi p)$ satisfies $\E[(2p-1)\eta]\approx 0$ and produces
near-zero theoretical and empirical bias regardless of $\delta$, confirming
that calibration errors orthogonal to the signed score leave the estimator
unbiased. The \emph{linear} shape lies between these extremes.

\begin{table}[H]
\centering
\caption{Calibration failure and sharp sensitivity bound ($n=2{,}000$,
$\tau=1$, $\sigma_u=0.30$, $R=2{,}000$ replications).}
\label{tab:sim3}
\begin{threeparttable}
\begin{tabular}{llrrrr}
\toprule
$\eta$ shape & $\delta$ & Emp.\ bias & Theo.\ bias & Sharp bound & Tightness \\
\midrule
Worst-case
  & 0.05 & $+$0.434 & $+$0.439 & 0.439 & 0.989 \\
  & 0.10 & $+$0.852 & $+$0.879 & 0.879 & 0.970 \\
  & 0.15 & $+$1.199 & $+$1.318 & 1.318 & 0.910 \\
  & 0.20 & $+$1.515 & $+$1.756 & 1.756 & 0.862 \\ \midrule
Linear
  & 0.05 & $+$0.232 & $+$0.226 & 0.439 & 0.528 \\
  & 0.10 & $+$0.423 & $+$0.451 & 0.878 & 0.482 \\
  & 0.15 & $+$0.602 & $+$0.676 & 1.316 & 0.457 \\
  & 0.20 & $+$0.774 & $+$0.903 & 1.757 & 0.440 \\ \midrule
Symmetric
  & 0.05 & $-$0.009 & $\approx$0 & 0.439 & 0.020 \\
  & 0.10 & $-$0.003 & $\approx$0 & 0.878 & 0.004 \\
  & 0.15 & $+$0.000 & $\approx$0 & 1.317 & 0.000 \\
  & 0.20 & $+$0.008 & $\approx$0 & 1.757 & 0.004 \\
\bottomrule
\end{tabular}
\begin{tablenotes}\small
\item \textit{Notes}: Worst-case: $\eta=\delta\,\mathrm{sgn}(2p-1)$.
Linear: $\eta=\delta(2p-1)$. Symmetric: $\eta=\delta\sin(\pi p)$.
Theoretical bias: $B_{\mathrm{cal}}=\tau\E[(2p-1)\eta]/(2\Vs)$.
Sharp bound: $|\tau|\,\delta\,\E[|2p-1|]/(2\Vs)$.
Tightness $=|$emp.\ bias$|/$sharp bound. Oracle estimator.
The DGP truncates $p+\eta$ to $[0,1]$, so the worst-case shape is fully
attainable only where $\min(p,1-p)\geq\delta$; the resulting tightness
shortfall at larger $\delta$ is systematic (see the text).
\end{tablenotes}
\end{threeparttable}
\end{table}

\begin{figure}[H]
\centering
\includegraphics[width=\textwidth]{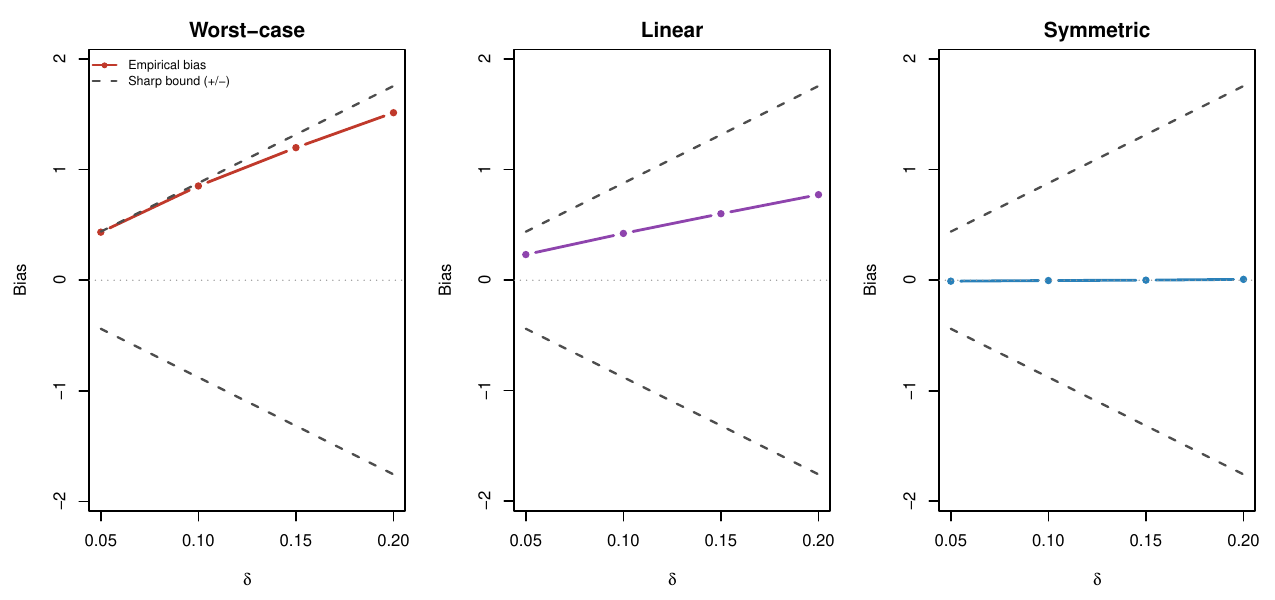}
\caption{Empirical bias (points) and sharp bound (dashed) as functions of
$\delta$ for three calibration error shapes. The worst-case shape nearly
attains the bound; the symmetric shape generates bias indistinguishable
from zero. Confirms Proposition~\ref{prop:robust}.}
\label{fig:bias}
\end{figure}

\subsection{Attenuation by hard-threshold classification}
\label{sec:mc-threshold}

\noindent
Table~\ref{tab:sim4} and Figure~\ref{fig:attenuation} evaluate the attenuation
result in a DGP satisfying $r(X)=\frac12$ a.s.\ and the conditional symmetry
condition of Appendix~\ref{app:threshold}, at $\sigma_u\in\{0.10,0.20,0.30\}$
with $n=1{,}000$ and $\tau=1$. The oracle and orthogonal estimators are
centred on $\tau=1$ in every setting. The threshold estimator converges to
approximately $\hat\kappa\tau$ and attenuation worsens sharply as $\sigma_u$
decreases: at $\sigma_u=0.10$, the threshold estimate is approximately $0.08$
where the truth is $1$.

\begin{table}[H]
\centering
\caption{Hard-threshold versus moment estimators ($n=1{,}000$, $\tau=1$,
$R=2{,}000$ replications).}
\label{tab:sim4}
\begin{threeparttable}
\begin{tabular}{rllrrr}
\toprule
$\sigma_u$ & $\hat\kappa$ & Estimator & Mean $\hat\tau$ & Bias & RMSE \\
\midrule
0.10 & 0.100 & Oracle    & 1.008 & $+$0.008 & 0.889 \\
     &       & Plug-in   & 0.897 & $-$0.103 & 0.719 \\
     &       & Threshold & 0.085 & $-$0.915 & 0.919 \\ \midrule
0.20 & 0.172 & Oracle    & 1.005 & $+$0.005 & 0.379 \\
     &       & Plug-in   & 0.944 & $-$0.056 & 0.353 \\
     &       & Threshold & 0.163 & $-$0.837 & 0.840 \\ \midrule
0.30 & 0.252 & Oracle    & 1.005 & $+$0.005 & 0.239 \\
     &       & Plug-in   & 0.952 & $-$0.048 & 0.232 \\
     &       & Threshold & 0.249 & $-$0.751 & 0.754 \\
\bottomrule
\end{tabular}
\begin{tablenotes}\small
\item \textit{Notes}: $\hat\kappa=2\,\overline{|p-\frac12|}$ is the
empirical attenuation factor. Under the conditions of
Appendix~\ref{app:threshold}, $\plim\,\hat\tau_{\mathrm{ht}}=\kappa\tau$.
Plug-in: degree-2 polynomial ridge models without cross-fitting.
\end{tablenotes}
\end{threeparttable}
\end{table}

\begin{figure}[H]
\centering
\includegraphics[width=\textwidth]{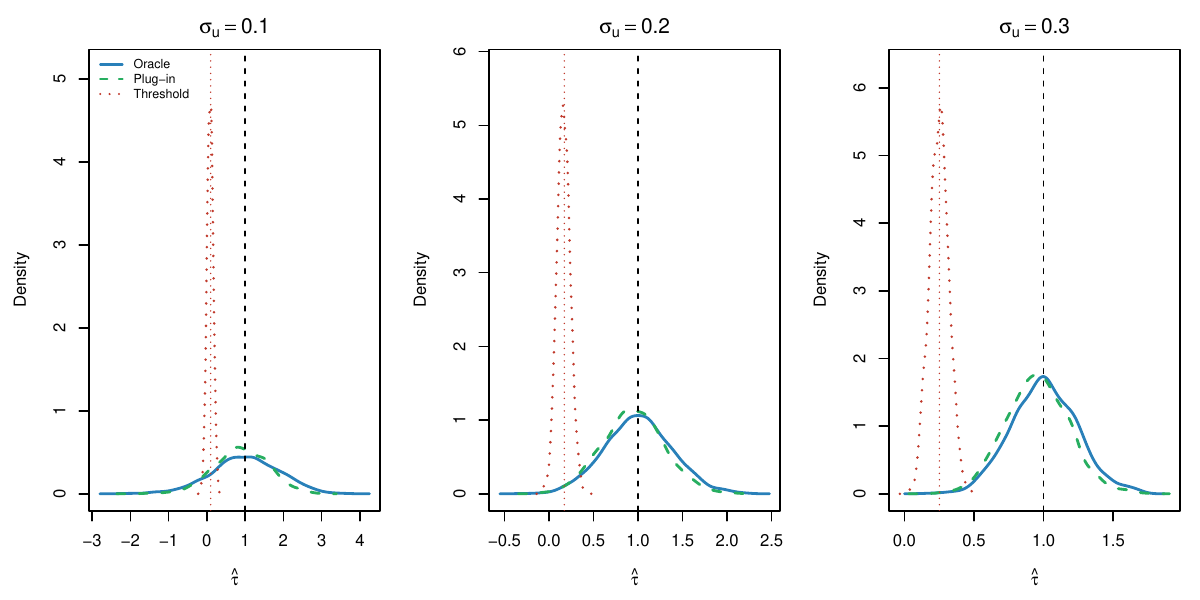}
\caption{Sampling distributions of the oracle, plug-in, and hard-threshold
estimators at three levels of score dispersion $\sigma_u$. The threshold
estimator is centred well below $\tau=1$ in each panel; attenuation worsens
as $\sigma_u$ decreases. Confirms Appendix~\ref{app:threshold}.}
\label{fig:attenuation}
\end{figure}

\subsection{Heterogeneous effects and the variance-weighted estimand}
\label{sec:mc-hetero}

\noindent
\begin{sloppypar}
Remark~\ref{rem:hetero} establishes that when the
structural effect varies with covariates, the moment estimator identifies the
variance-weighted average $\bar\tau=\E[\tau(X)\Var(p\mid X)]/\E[\Var(p\mid X)]$
rather than the simple mean $\E[\tau(X)]$. Table~\ref{tab:sim5} and
Figure~\ref{fig:weighted} confirm this in a DGP with
$\tau(X)=\tau_0+\tau_1 X_1$, $\tau_0=1$, $\tau_1=0.5$.
Design A holds $\Var(p\mid X)$ constant, so $\bar\tau=\tau_0=1.001$. Design B
introduces $X$-varying score variance $\Var(p\mid X)\propto e^{0.8X_1}$, which
upweights units with large positive $X_1$ and gives $\bar\tau=1.362\neq\E[\tau(X)]=1$.
In both designs, bias is negligible and shrinks towards zero as $n$ grows,
confirming that the oracle estimator correctly identifies the variance-weighted
estimand.
\end{sloppypar}

\begin{table}[H]
\centering
\caption{Heterogeneous effects: variance-weighted estimand recovery
($\tau(X)=\tau_0+\tau_1 X_1$, $\tau_0=1$, $\tau_1=0.5$,
$R=2{,}000$ replications).}
\label{tab:sim5}
\begin{threeparttable}
\begin{tabular}{llrrrr}
\toprule
Design & $n$ & $\bar\tau$ & Bias & RMSE & Coverage \\ \midrule
A: constant $\Var(p\mid X)$
  & 500   & 1.001 & $-$0.019 & 0.517 & 0.956 \\
  & 1,000 & 1.001 & $-$0.009 & 0.378 & 0.942 \\
  & 5,000 & 1.001 & $+$0.000 & 0.168 & 0.942 \\ \midrule
B: heterogeneous $\Var(p\mid X)$
  & 500   & 1.362 & $-$0.006 & 0.416 & 0.950 \\
  & 1,000 & 1.362 & $-$0.004 & 0.296 & 0.951 \\
  & 5,000 & 1.362 & $+$0.001 & 0.131 & 0.953 \\
\bottomrule
\end{tabular}
\begin{tablenotes}\small
\item \textit{Notes}: $\bar\tau=\E[\tau(X)\Var(p\mid X)]/\E[\Var(p\mid X)]$.
Design A: $\Var(p\mid X)$ constant; $\bar\tau=\tau_0=1$.
Design B: $\Var(p\mid X)=\sigma_u^2 e^{0.8X_1}$; $\bar\tau\neq\E[\tau(X)]=1$.
Oracle estimator with true $m(X)$ and $r(X)$.
\end{tablenotes}
\end{threeparttable}
\end{table}

\begin{figure}[H]
\centering
\includegraphics[width=.75\textwidth]{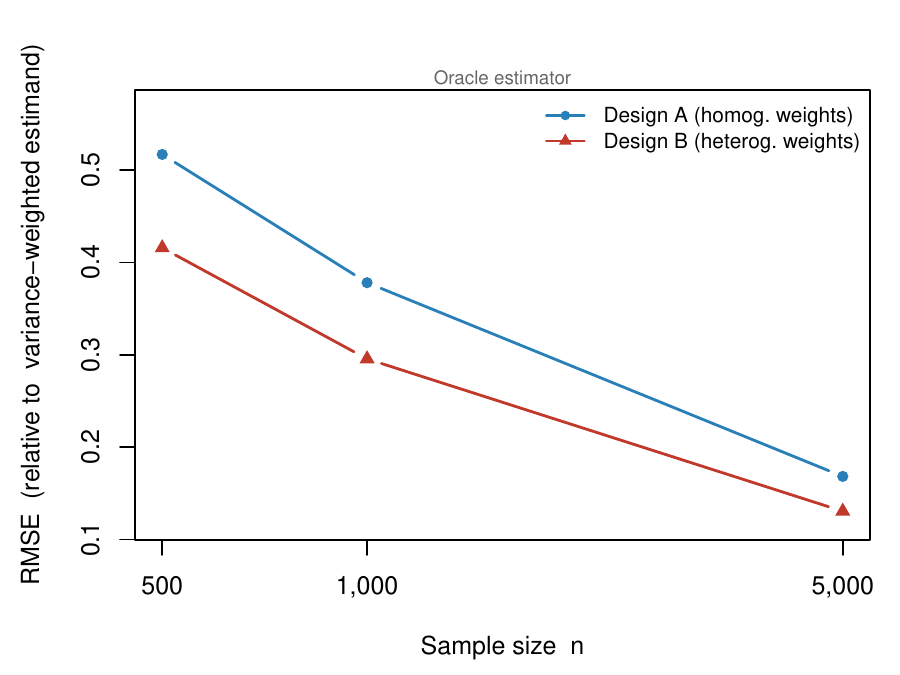}
\caption{RMSE relative to the variance-weighted estimand $\bar\tau$ as a
function of sample size, for Design A (constant weights, $\bar\tau=\tau_0=1$)
and Design B (heterogeneous weights, $\bar\tau=1.362$). Both designs converge
to their respective $\bar\tau$, confirming Remark~\ref{rem:hetero}.}
\label{fig:weighted}
\end{figure}

\section{Discussion}
\label{sec:discussion}
\noindent
We have established point identification of a structural latent-group
coefficient $\tau$, a sharp characterisation of identification failure,
oracle inference and plug-in consistency results, and robustness
to calibration failure. This section discusses three further topics: the
attenuation induced by hard-threshold classification, the interpretation of
the estimand under heterogeneous effects, and the division of labour with
the companion papers together with the directions that remain open.

A common alternative to the moment estimator~\eqref{eq:oracle} is to threshold
the score at $p=\frac12$, form a binary indicator $\tilde G:=\1\{p>\frac12\}$,
and estimate the group gap as the difference in conditional means across the two
induced cells. Appendix~\ref{app:threshold} shows that this estimator converges
to $\kappa\tau$ with $\kappa=2\E[|p-\frac12|]\in(0,1)$ under mild conditions,
so the moment estimator strictly dominates whenever classification is imperfect.
The Monte Carlo evidence in Section~\ref{sec:mc-threshold} confirms that
attenuation can be severe: when score dispersion is low the threshold estimator
recovers less than ten percent of the true coefficient.

Under heterogeneous effects, Remark~\ref{rem:hetero} shows that the moment
equation identifies the variance-weighted average
$\bar\tau=\E[\tau(X)\Var(p\mid X)]/\E[\Var(p\mid X)]$, with weight equal to
the local informativeness of the score; Section~\ref{sec:mc-hetero}
verifies the weighting formula, including a design where $\bar\tau$ and
$\E[\tau(X)]$ are held apart by construction.

A companion applied paper \citep{kurbucz2026kbs} converts
the attenuation and collapse results into a practitioner-facing diagnostic
for pseudo-labelled regression, including the conditional-calibration test
battery and the labelled-subsample recalibration protocol referred to in
Remark~\ref{rem:obtain}.

Three directions remain genuinely open. First, our sensitivity bounds are
sharp over the unrestricted class $\mathcal{H}_\delta$, but tighter bounds
should be achievable under shape restrictions on $\eta(p,X)$ (monotonicity,
smoothness, or orthogonality to a known basis). Second, inference that is
uniformly valid as $\Vs\to 0$---local to the identification boundary---is
not covered by any of the results above. Third, the framework treats the
score-generating mechanism as externally supplied; jointly modelling score
construction and downstream inference, including the sampling uncertainty
of a first-stage classifier, is an open problem with clear practical
payoff.

\section{Conclusion}
\label{sec:conclusion}
\noindent
This paper has developed a framework for identifying and estimating a
structural group effect when the binary group indicator is latent but a
calibrated probability score is observed. Under a constant-coefficient
conditional mean model and the calibration condition $\E[G\mid p,X]=p$, the
structural coefficient $\tau$ is point-identified by a closed-form ratio of
observable moments, provided the score carries residual variation beyond
covariates. Identification fails precisely when this residual variation is
absent, and the failure is characterised constructively by an explicit
family of observationally equivalent models---outcome-coupled latent
indicators---whose coefficients span a nondegenerate interval.

Several conclusions follow from the analysis. The identified coefficient is a
within-covariate-cell structural effect, distinct from the marginal group mean
gap---itself identified in closed form under conditional calibration; the two
coincide if and only if the latent groups are covariate-balanced.
The oracle estimator is $\sqrt{n}$-consistent and asymptotically normal with a
closed-form sandwich variance, and the moment approach strictly dominates
hard-threshold classification whenever the score is imperfectly concentrated.
When calibration is imperfect, the bias admits an exact formula and is bounded
by a sharp sensitivity bound that scales inversely with the residual score
variance, consistently with the identification result.

The Monte Carlo evidence confirms each of these predictions quantitatively:
the oracle estimator is approximately unbiased and asymptotically normal, RMSE
diverges at the predicted rate as $\Vs\to 0$, calibration errors produce bias
bounded by the sharp formula, and hard-threshold classification induces the
predicted attenuation factor $\kappa$.

Feasible inference with flexible nuisance estimators follows from the
orthogonal score of Appendix~\ref{app:ortho} and the double machine
learning framework of \citet{chernozhukov2018}. The practical entry point
for applied work is equally concrete: conditional calibration is attainable
by construction for posterior-type scores and testable-and-enforceable on a
labelled validation subsample (Remark~\ref{rem:obtain}), with the companion
diagnostic paper \citep{kurbucz2026kbs} supplying the operational protocol.
The framework developed here---centred on a calibrated probability as a
proxy for latent membership---has natural applications in fairness auditing
with proxy attributes such as BISG, in distributional analysis, and in any
empirical setting where group indicators are administratively missing but
predictable from observed characteristics.

\clearpage
\bibliographystyle{elsarticle-harv}
\bibliography{references}

\appendix
\renewcommand{\thesection}{A}
\renewcommand{\thesubsection}{A.\arabic{subsection}}
\setcounter{theorem}{0}
\renewcommand{\thetheorem}{A.\arabic{theorem}}

\section{Appendix}
\label{app:proofs}

\noindent
Throughout the Appendix we use without further notice: $z=2p-1$, $R=Y-m(X)$,
$a=p-r(X)$, $\Vs=\E[a^{2}]$, and the identities
$\E[R\mid X]=\E[a\mid X]=0$ from~\eqref{eq:derived}.

\subsection{Proof of Lemma~\ref{lem:basic}}
\label{app:lem-basic}

\noindent
\textit{First identity.}
By the tower property and Assumption~\ref{ass:cal},
\[
  \pi(X)=\E[G\mid X]=\E[\E[G\mid p,X]\mid X]=\E[p\mid X]=r(X).
\]
\textit{Second identity.}
By the tower property and Assumption~\ref{ass:mean},
$m(X)=\E[\mu(X)+\tau G\mid X]=\mu(X)+\tau\pi(X)=\mu(X)+\tau r(X)$,
where the last equality uses the first identity.\hfill$\square$

\subsection{Proof of Lemma~\ref{lem:residual}}
\label{app:lem-resid}

\noindent
$\E[R\mid G,p,X]=\E[Y\mid G,p,X]-m(X)=(\mu(X)+\tau G)-(\mu(X)+\tau r(X))
=\tau(G-r(X))$, using Assumption~\ref{ass:mean} and the second identity
of Lemma~\ref{lem:basic}.
Define $\varepsilon:=R-\tau(G-r(X))$; then $\E[\varepsilon\mid G,p,X]=0$
by construction.\hfill$\square$

\subsection{Proof of Theorem~\ref{thm:moment}}
\label{app:thm-moment}

\noindent
We prove $\E[zR]=2\tau\Vs$.

\smallskip
\noindent\textit{Step 1 (noise term).}
By Lemma~\ref{lem:residual}, $R=\tau(G-r(X))+\varepsilon$. Since $z$ is
$\sigma(p,X)$-measurable and $\E[\varepsilon\mid G,p,X]=0$,
\begin{align*}
  \E[zR]
    &= \tau\E[z(G-r(X))] + \E[z\varepsilon] \\
    &= \tau\E[z(G-r(X))]
       + \E\!\left[z\,\E[\varepsilon\mid G,p,X]\right] \\
    &= \tau\E[z(G-r(X))].
\end{align*}

\noindent\textit{Step 2 (calibration).}
Conditioning on $(p,X)$ and applying Assumption~\ref{ass:cal},
\[
  \E[z(G-r(X))]=\E[(2p-1)\E[G-r(X)\mid p,X]]=\E[(2p-1)(p-r(X))].
\]
\noindent\textit{Step 3 (algebra).}
Write $a=p-r(X)$ so $2p-1=2a+(2r(X)-1)$. Then
$(2p-1)(p-r(X))=2a^{2}+(2r(X)-1)a$.
Since $\E[a\mid X]=0$, taking expectations:
$\E[(2r(X)-1)a]=0$, so $\E[(2p-1)(p-r(X))]=2\E[a^{2}]=2\Vs$.
Combining: $\E[zR]=2\tau\Vs$, which is part (a).

\smallskip
\noindent\textit{Part (b).}
Under Assumption~\ref{ass:var}, $\Vs>0$, so dividing both sides of
\eqref{eq:moment} by $2\Vs$ gives \eqref{eq:tau-id}. Every quantity on the
right-hand side of \eqref{eq:tau-id} is a functional of the joint law of
$(Y,X,p)$, so $\tau$ is point-identified.\hfill$\square$

\subsection{Proof of Proposition~\ref{prop:failure}}
\label{app:prop-failure}

\noindent
\textit{Part (a).}  If $\Vs=0$ then $a=0$ a.s., so $z=2r(X)-1$ is
$\sigma(X)$-measurable and $\E[zR]=\E[z\E[R\mid X]]=0$. The right-hand
side $2\tau\Vs=0$.

\textit{Part (b).}
Since $\Vs=0$, part (c) gives $p=r(X)$ a.s., so $\sigma(p,X)=\sigma(X)$;
in particular Assumption~\ref{ass:cal} for a candidate indicator $G'$
reduces to $\E[G'\mid X]=r(X)$ a.s.

Fix $\tau'$ with $|\tau'|\leq\bar\tau=2\underline r\,\underline\kappa$.
Write $u:=Y-m(X)$ and define
\begin{gather*}
  \psi:=g(u)-\E[g(u)\mid X],\qquad
  \kappa(X):=\E[u\,g(u)\mid X],\\
  c(X):=\frac{\tau'\,r(X)\bigl(1-r(X)\bigr)}{\kappa(X)} .
\end{gather*}
Note $u\,g(u)=u^{2}/(1+|u|)\geq 0$, and by hypothesis
$\kappa(X)\geq\underline\kappa>0$ a.s., so $c$ is well defined. Since
$|g|<1$ we have $|\psi|<2$, and since $r(1-r)\leq\tfrac14$,
\begin{gather*}
  |c(X)|\;\leq\;\frac{|\tau'|}{4\underline\kappa}
  \;\leq\;\frac{\bar\tau}{4\underline\kappa}
  \;=\;\frac{\underline r}{2},\\
  \text{hence}\qquad
  |c(X)\psi|<\underline r\;\leq\;\min\bigl(r(X),\,1-r(X)\bigr)\as
\end{gather*}
Therefore
\[
  q(Y,X):=r(X)+c(X)\,\psi\;\in\;[0,1]\as
\]
Extend the probability space with $V\sim\mathrm{Uniform}(0,1)$ independent
of $(Y,X,p)$ and set $G':=\1\{V\leq q(Y,X)\}$, so that
$\Pbb(G'=1\mid Y,X)=q(Y,X)$. The observable triple $(Y,X,p)$ is untouched
by the construction, so its joint distribution is exactly the original
one: observational equivalence is exact, not merely at the level of
$m(X)$.

\emph{Assumption~\ref{ass:cal} for $G'$:}
$\E[G'\mid X]=\E[q(Y,X)\mid X]=r(X)$ because $\E[\psi\mid X]=0$; by the
reduction above this is Assumption~\ref{ass:cal}.

\emph{Assumption~\ref{ass:mean} for $(G',\tau',\mu')$ with
$\mu'(X):=m(X)-\tau'r(X)$:}
first, $\E[Y\psi\mid X]=\E[u\,g(u)\mid X]=\kappa(X)$, using
$\E[u\mid X]=0$ and $\E[\psi\mid X]=0$. For the binary $G'$ the conditional
mean given $(G',X)$ is determined by its two branch values
$\E[Y\mid G'=g,X]:=\E[Y\,\1\{G'=g\}\mid X]\,/\,\Pbb(G'=g\mid X)$, both well
defined since $\min\bigl(r(X),1-r(X)\bigr)\geq\underline r>0$ a.s. Hence,
for a.e.\ $x$:
\begin{align*}
  \E[Y\mid G'=1,X=x]
  &=\frac{\E[Y\,q(Y,X)\mid X=x]}{\E[q(Y,X)\mid X=x]}
   =\frac{r(x)m(x)+c(x)\kappa(x)}{r(x)}\\
  &=m(x)+\tau'\bigl(1-r(x)\bigr),\\[2pt]
  \E[Y\mid G'=0,X=x]
  &=\frac{m(x)-r(x)m(x)-c(x)\kappa(x)}{1-r(x)}
   =m(x)-\tau'r(x),
\end{align*}
where both final equalities substitute
$c(x)\kappa(x)=\tau'r(x)(1-r(x))$. Combining the two cases,
$\E[Y\mid G',X]=\mu'(X)+\tau'G'$ a.s., and since
$\sigma(G',p,X)=\sigma(G',X)$, Assumption~\ref{ass:mean} holds with
coefficient $\tau'$. Assumption~\ref{ass:moments} concerns only $(Y,p)$
and is unchanged.

Every $\tau'\in[-\bar\tau,\bar\tau]$ is therefore the structural
coefficient of a model observationally indistinguishable from the
original, so $\tau$ is not point-identified.

\textit{Part (c).}
$\Vs=\E[\Var(p\mid X)]=0\iff\Var(p\mid X)=0\as\iff p=r(X)\as$\hfill$\square$

\subsection{Proofs of Theorem~\ref{thm:clt} and Corollary~\ref{cor:wald}}
\label{app:inference}

\noindent
\textit{Proof of Theorem~\ref{thm:clt}.}
Write $U_i:=(2p_i-1)(Y_i-m(X_i))$ and $V_i:=(p_i-r(X_i))^{2}$, so
$\tauor=\bar U_n/(2\bar V_n)$. By the WLLN, $\bar V_n\pto\Vs>0$. By the
bivariate CLT (valid under Assumption~\ref{ass:moments}),
$\sqrt{n}(\bar U_n-\E[U_i],\,\bar V_n-\Vs)\dto\N(0,\Sigma)$ where
$\Sigma=\Var(U_i,V_i)$.

Apply the delta method to $f(u,v)=u/(2v)$ with gradient
\[
  \nabla f\big|_{(\E[U_i],\,\Vs)}
  =\left(\frac{1}{2\Vs},\;-\frac{\E[U_i]}{2(\Vs)^{2}}\right)^{\!\top}
  =\left(\frac{1}{2\Vs},\;-\frac{\tau}{\Vs}\right)^{\!\top}.
\]
Setting $W_i:=U_i-2\tau V_i=\psi_i$ (the centred score), expanding
$\Var(U_i)$, $\Cov(U_i,V_i)$, and $\Var(V_i)$ in terms of $W_i$ and $V_i$,
and collecting: all cross-terms cancel and
$(\nabla f)^{\top}\Sigma\,\nabla f=\E[W_i^{2}]/(2\Vs)^{2}=\E[\psi_i^{2}]/(2\Vs)^{2}$.\hfill$\square$

\textit{Proof of Corollary~\ref{cor:wald}.}
$\frac{1}{n}\sum_i\hat\psi_i^{2}\pto\E[\psi_i^{2}]$ by WLLN and
$\tauor\pto\tau$; the denominator converges to $(2\Vs)^{2}$. Slutsky gives
$\hat\sigma^{2}_{\mathrm{or}}\pto\sigma^{2}_{\mathrm{or}}$, and coverage
follows from Theorem~\ref{thm:clt}.\hfill$\square$

\subsection{Proofs of Lemma~\ref{lem:denom} and Proposition~\ref{prop:plugin}}
\label{app:plugin}

\noindent
\textit{Proof of Lemma~\ref{lem:denom}.}
Let $\delta_i:=\hat r(X_i)-r(X_i)$. Then
$(p_i-\hat r(X_i))^{2}-(p_i-r(X_i))^{2}=\delta_i(\hat r(X_i)+r(X_i)-2p_i)$.
Since all three terms lie in $[0,1]$, the second factor is bounded by $2$ in
absolute value, giving
$n^{-1}\sum_i|\ldots|\leq 2n^{-1}\sum_i|\delta_i|
\leq 2(n^{-1}\sum_i\delta_i^{2})^{1/2}\pto 0$
by Cauchy--Schwarz and the hypothesis.\hfill$\square$

\textit{Proof of Proposition~\ref{prop:plugin}.}
\emph{Numerator:}
write $\frac{1}{n}\sum_i(2p_i-1)(Y_i-\hat m(X_i))=A_n-B_n$ where
$A_n\pto 2\tau\Vs$ (WLLN) and $|B_n|\leq(n^{-1}\sum_i(\hat m-m)^{2})^{1/2}
\pto 0$ by Cauchy--Schwarz.

\emph{Denominator:} converges to $\Vs>0$ by WLLN and Lemma~\ref{lem:denom}.

\emph{Conclusion:} CMT gives $\hat\tau\pto\tau$.\hfill$\square$

\subsection{Proof of Proposition~\ref{prop:robust}}
\label{app:robust}

\noindent
\textit{Part (a).}
Under miscalibration $\E[G\mid p,X]=p+\eta$, Step 2 of the proof of
Theorem~\ref{thm:moment} gives
$\E[z(G-r(X))]=\E[(2p-1)(g(p,X)-r(X))]=\E[(2p-1)(\eta+a)]
=\E[z\eta]+2\Vs$.
Hence $\E[zR]=\tau(\E[z\eta]+2\Vs)$, and
$\plim\,\tauor=\tau+\tau\E[z\eta]/(2\Vs)=\tau+B_{\mathrm{cal}}$.\hfill$\square$

\textit{Part (b).}
From~\eqref{eq:bias}, $|B_{\mathrm{cal}}|=|\tau||\E[z\eta]|/(2\Vs)$.
For $\eta\in\mathcal{H}_\delta$, H\"{o}lder gives
$|\E[z\eta]|\leq\E[|z||\eta|]\leq\delta\E[|z|]$, so
$|B_{\mathrm{cal}}|\leq|\tau|\delta\E[|z|]/(2\Vs)$.

For attainment, consider $\eta^{*}=-\delta\,\sgn(z)$. Then
$\E[z\eta^{*}]=-\delta\E[|z|]$, so $|B_{\mathrm{cal}}|$ equals the bound.
Admissibility: $|\eta^{*}|\leq\delta$; and $p+\eta^{*}\in[0,1]$ a.s.\
because on $\{z>0\}$ (i.e.\ $p>\tfrac12$) we have
$p-\delta\geq\tfrac12-\delta\geq 0$ and $p-\delta\leq 1$, on $\{z<0\}$ we
have $p+\delta\leq\tfrac12+\delta\leq 1$ and $p+\delta\geq 0$, and on
$\{z=0\}$, $\eta^{*}=0$. Hence $\eta^{*}\in\mathcal{H}_\delta$ and the
supremum is attained.

Two remarks on the geometry of the extremum. The sign-reversed shape
$+\delta\,\sgn(z)$, which also achieves $|\E[z\eta]|=\delta\E[|z|]$
algebraically, is admissible only when $\min(p,1-p)\geq\delta$ a.s.: near
the endpoints of the score distribution the constraint $p+\eta\in[0,1]$
truncates it, and the worst admissible \emph{bias-amplifying} error is
$\sgn(z)\min\bigl(\delta,\min(p,1-p)\bigr)$, with attained bias
$|\tau|\,\E\bigl[|z|\min(\delta,\min(p,1-p))\bigr]/(2\Vs)$, strictly below
the bound when the score places mass near $\{0,1\}$. The worst-case
\emph{magnitude} over $\mathcal{H}_\delta$ is nevertheless attained
exactly, in the attenuating direction, as shown above.\hfill$\square$

\subsection{Attenuation by hard-threshold classification}
\label{app:threshold}

\noindent
Define $\tauht:=\bar R_{\{p>1/2\}}-\bar R_{\{p\leq 1/2\}}$ where
$\bar R_A:=|A|^{-1}\sum_{i\in A}R_i$.

\begin{aproposition}[Attenuation]
\label{prop:threshold}
Under Assumptions~\ref{ass:mean}--\ref{ass:var}, with $r(X)=\frac12$
a.s., $\Pbb(p\in(0,1))>0$, and the condition that for $\Pbb_X$-almost every
$x$ the conditional law of $a=p-\frac12$ given $X=x$ is symmetric around
zero,
\begin{equation}
  \plim\;\tauht = \kappa\tau,\qquad
  \kappa := 2\E[|a|] = 2\E\!\left[|p-\tfrac12|\right]\;\in\;(0,1).
  \label{eq:attenuation}
\end{equation}
\end{aproposition}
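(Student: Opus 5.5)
The plan is to reduce $\tauht$ to a ratio of population moments by the weak law of large numbers. Then compute those moments from the conditional mean of $R$ given $(p,X)$ supplied by Lemma~\ref{lem:residual} and Assumption~\ref{ass:cal}. Finally, use the symmetry hypothesis to evaluate the two cell means.

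First, write $\tauht$ as
\[
\frac{n^{-1}\sum_i R_i\1\{p_i>\tfrac12\}}{n^{-1}\sum_i\1\{p_i>\tfrac12\}}-\frac{n^{-1}\sum_i R_i\1\{p_i\leq\tfrac12\}}{n^{-1}\sum_i\1\{p_i\leq\tfrac12\}}.
\]
Since $\E[R^2]<\infty$ by Assumption~\ref{ass:moments}, the WLLN and the continuous mapping theorem give
\[
\plim\,\tauht=\frac{\E[R\1\{a>0\}]}{\Pbb(a>0)}-\frac{\E[R\1\{a\leq 0\}]}{\Pbb(a\leq 0)},
\]
provided both cell probabilities are positive. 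Next, because $\1\{a>0\}$ is $\sigma(p,X)$-measurable, Lemma~\ref{lem:residual} with $\E[\varepsilon\mid G,p,X]=0$ and Step~2 of the proof of Theorem~\ref{thm:moment} give $\E[R\mid p,X]=\tau(p-r(X))=\tau a$. Hence $\E[R\1\{a>0\}]=\tau\E[a\1\{a>0\}]$, and likewise on $\{a\leq0\}$.

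Now invoke the conditional symmetry of $a$ given $X$. It yields $\E[a\1\{a>0\}\mid X]=\tfrac12\E[|a|\mid X]$ and $\E[a\1\{a<0\}\mid X]=-\tfrac12\E[|a|\mid X]$. It also yields $\Pbb(a>0\mid X)=\Pbb(a<0\mid X)$. Integrating over $X$, the numerators are $\pm\tfrac{\tau}{2}\E|a|$. If the cell probabilities both equal $\tfrac12$, the limit is $\tau\E|a|\,(1+1)=2\tau\E|a|=\kappa\tau$. The bounds on $\kappa$ come from two facts. For $\kappa>0$: Assumption~\ref{ass:var} gives $\Vs>0$, so $a\neq0$ with positive probability. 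For $\kappa<1$: $|a|\leq\tfrac12$ always, and $\Pbb(p\in(0,1))>0$ makes $|a|<\tfrac12$ on a set of positive probability, so $\E|a|<\tfrac12$.

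The step I expect to be the main obstacle is the cell probabilities, because of a possible atom of $a$ at zero, i.e.\ $p=\tfrac12$. Symmetry alone only gives $\Pbb(a>0)=\tfrac12(1-\Pbb(a=0))$. The tie-breaking convention assigns the atom to the lower cell, so the two denominators are unequal. In that case the limit is $\tau\E|a|/\bigl(4\Pbb(a>0)\Pbb(a\leq0)\bigr)$, which exceeds $\kappa\tau$. I would therefore check whether the intended reading includes $\Pbb(p=\tfrac12)=0$; for example, this holds if $p$ has a continuous conditional law, as in the Beta design of Section~\ref{sec:mc-threshold}. If so, I would state it explicitly and conclude. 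If not, I would report the general formula above, which reduces to \eqref{eq:attenuation} exactly when there is no atom. Positivity of both cells follows from $\Vs>0$ and symmetry in either case.
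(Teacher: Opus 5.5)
Your route is the same as the paper's. You reduce $\tauht$ to a difference of cell means by the WLLN, use $\E[R\mid p,X]=\tau a$, evaluate the cell moments by symmetry, and bound $\kappa$ via $\Vs>0$ and $\Pbb(p\in(0,1))>0$.

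Your concern about an atom at $p=\frac12$ is justified, and it is a defect of the statement, not of your argument. The paper's proof asserts $\Pbb(a>0\mid X=x)=\frac12$ directly from symmetry, and this fails whenever $\Pbb(p=\frac12\mid X=x)>0$. For a concrete case, take $p$ uniform on $\{0.2,0.5,0.8\}$ independent of $X$, $G\sim\mathrm{Bernoulli}(p)$, and $Y=\mu(X)+\tau G+\varepsilon$. This satisfies every hypothesis, yet $\plim\,\tauht=0.3\tau-(-0.15\tau)=0.45\tau$, while $\kappa\tau=0.4\tau$. So the proposition needs $\Pbb(p=\frac12)=0$ added, as you propose, or more weakly $\Pbb(p>\frac12)=\frac12$. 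You also attribute the positivity of both cells correctly, to $\Vs>0$ plus symmetry. The paper cites $\Pbb(p\in(0,1))>0$, which by itself does not give $\Pbb(p>\frac12)>0$.

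Your fallback formula needs two corrections.
\begin{itemize}
\item \emph{Arithmetic.} The two numerators are $\pm\frac{\tau}{2}\E|a|$ and $\Pbb(a>0)+\Pbb(a\le 0)=1$, so the limit is $\tau\E|a|/\bigl(2\,\Pbb(a>0)\Pbb(a\le 0)\bigr)$, not with a $4$ in the denominator. With the $4$, the atomless case would give $\tau\E|a|=\kappa\tau/2$, contradicting your own reduction claim. Under symmetry the correct limit equals $\kappa\tau/\bigl(1-\Pbb(p=\frac12)^2\bigr)$, which exceeds $\kappa\tau$ in absolute value.
\item \emph{Role of symmetry.} The identity $\E[a\1\{a>0\}\mid X]=\frac12\E[|a|\mid X]$ does not need symmetry. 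It follows from $a\1\{a>0\}=\frac12(|a|+a)$ and $\E[a\mid X]=0$. Symmetry enters only through the cell probabilities, so the general formula holds under $r(X)=\frac12$ alone.
\end{itemize}
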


\begin{proof}
By Lemma~\ref{lem:residual} and Assumption~\ref{ass:cal},
$\E[R\mid p,X]=\tau(p-r(X))=\tau a$ where $a=p-\frac12$ under
$r(X)=\frac12$ a.s.

\textit{Computing $\E[R\mid p>\frac12]$:}
since $\{p>\frac12\}=\{a>0\}$, the tower property gives
\[
  \E[R\mid p>\tfrac12]
  = \tau\,\E[a\mid a>0]
  = \tau\,\frac{\E[a\,\1\{a>0\}]}{\Pbb(a>0)}.
\]
Conditional on $X=x$, the symmetry condition gives $\Pbb(a>0\mid X=x)=\frac12$
and, since $a\1\{a>0\}=\frac12|a|+\frac12 a$ with $\E[a\mid X=x]=0$,
also $\E[a\,\1\{a>0\}\mid X=x]=\frac12\E[|a|\mid X=x]$.
Integrating over $X$ gives $\Pbb(a>0)=\frac12$ and
$\E[a\,\1\{a>0\}]=\frac12\E[|a|]$. Hence
\[
  \E[R\mid p>\tfrac12]=\tau\cdot\frac{\tfrac12\E[|a|]}{\tfrac12}=\tau\E[|a|].
\]
By the same argument applied to $\{a\leq 0\}$, $\E[R\mid p\leq\frac12]=-\tau\E[|a|]$.
Both conditioning events have positive probability by $\Pbb(p\in(0,1))>0$,
so the WLLN gives $\plim\,\tauht=2\tau\E[|a|]=\kappa\tau$.

\textit{$\kappa\in(0,1)$:}
$a\in[-\frac12,\frac12]$ under $r(X)=\frac12$ a.s., so $\kappa\leq 1$
with equality iff $p\in\{0,1\}$ a.s., which is excluded.
Positivity follows from $\Vs>0$.
\end{proof}

The condition that $a\mid X$ is symmetric around zero is used at two steps:
to obtain $\Pbb(a>0)=\frac12$ and $\E[a\,\1\{a>0\}]=\frac12\E[|a|]$,
which together give $\E[a\mid a>0]=\E[|a|]$.
Without the conditional symmetry assumption, the conditional means on the
two thresholded cells no longer simplify to $\pm\tau\E[|a|]$, so the
attenuation factor generally lacks the closed form $\kappa=2\E[|a|]$.
The condition that $r(X)=\frac12$ a.s.\ ensures that the threshold
$p=\frac12$ coincides with the zero of the score residual $a=p-r(X)$, so
that $\{p>\frac12\}=\{a>0\}$; without it the probability limit of $\tauht$
generally no longer admits the simple closed form $\kappa\tau$ with
$\kappa=2\E[|a|]$.

\subsection{Covariance identity and IV interpretation}
\label{app:iv}

\noindent
\begin{alemma}[Covariance identity]
\label{lem:cov-gp}
Under Assumption~\ref{ass:cal}, $\Cov(G,p\mid X)=\Var(p\mid X)$ a.s.
\end{alemma}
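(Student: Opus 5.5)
The plan is to compute both conditional moments directly from the definition of conditional covariance, using the tower property to pass through $\sigma(p,X)$. Conditional covariances only require second moments, and these exist because $G\in\{0,1\}$ and $p\in[0,1]$ are bounded. Write
\[
  \Cov(G,p\mid X)=\E[Gp\mid X]-\E[G\mid X]\,\E[p\mid X].
\]
The goal is to show that each of the two pieces coincides with the corresponding piece of
\[
  \Var(p\mid X)=\E[p^{2}\mid X]-\E[p\mid X]^{2}.
\]

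For the product term, note that $\sigma(X)\subseteq\sigma(p,X)$ and that $p$ is $\sigma(p,X)$-measurable. Pulling out what is known and then applying Assumption~\ref{ass:cal} gives
\[
  \E[Gp\mid X]=\E\bigl[p\,\E[G\mid p,X]\mid X\bigr]=\E[p^{2}\mid X]\as
\]
For the mean term, the first identity of Lemma~\ref{lem:basic} gives $\E[G\mid X]=\pi(X)=r(X)=\E[p\mid X]$. This identity uses only Assumption~\ref{ass:cal} together with the tower property, so no other assumptions are needed.

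Substituting both pieces yields $\Cov(G,p\mid X)=\E[p^{2}\mid X]-r(X)^{2}=\Var(p\mid X)$ a.s. An equivalent route is to write $\Cov(G,p\mid X)=\E[(G-r(X))a\mid X]$ and condition on $(p,X)$, which gives $\E[a\cdot a\mid X]$. That is the same computation as Step 2 of the proof of Theorem~\ref{thm:moment}, carried out at the conditional level.

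There is no real obstacle here. The only point requiring care is that all equalities hold almost surely, as versions of conditional expectations, and that the integrability needed for the "taking out what is known" step is available. Boundedness of $G$ and $p$ settles the integrability at once.

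Finally, I would record the IV-style corollary that the paper uses. Taking expectations gives $\E[\Cov(G,p\mid X)]=\Vs$. Hence the population first-stage slope of $G-r(X)$ on $a$ is $\E[(G-r(X))a]/\E[a^{2}]=1$. This justifies the interpretation given after Theorem~\ref{thm:moment} and in Remark~\ref{rem:sharp}.
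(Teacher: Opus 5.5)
Your proof is correct and is the same argument as the paper's. You use the tower property through $\sigma(p,X)$ plus calibration to get $\E[Gp\mid X]=\E[p^{2}\mid X]$, and then $\E[G\mid X]=r(X)=\E[p\mid X]$ from the first identity of Lemma~\ref{lem:basic}. Your side remarks (boundedness for integrability, and that this identity needs only Assumption~\ref{ass:cal} even though the lemma is stated under all four assumptions) are accurate and slightly tidier than the paper.
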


\begin{proof}
$\E[Gp\mid X]=\E[p\E[G\mid p,X]\mid X]=\E[p^{2}\mid X]$ by
Assumption~\ref{ass:cal}. Also $\E[G\mid X]=\pi(X)=r(X)=\E[p\mid X]$ by
Lemma~\ref{lem:basic}. Hence
$\Cov(G,p\mid X)=\E[p^{2}\mid X]-(\E[p\mid X])^{2}=\Var(p\mid X)$.
\end{proof}

Lemma~\ref{lem:cov-gp} makes the IV analogy precise. We compute the
slope of the population regression of $G-r(X)$ on $a=p-r(X)$:
\begin{align*}
  \Cov(G-r(X),\,a)
  &= \Cov(G,p) - \Cov(r(X),p) \\
  &= \E[\Cov(G,p\mid X)] \\
  &\quad + \Cov(\E[G\mid X],\E[p\mid X])
     - \Cov(r(X),r(X)) \\
  &= \E[\Var(p\mid X)] + \Var(r(X)) - \Var(r(X)) \\
  &= \Vs.
\end{align*}
where the second equality uses the law of total covariance and the third uses
Lemma~\ref{lem:cov-gp} and $\E[G\mid X]=\E[p\mid X]=r(X)$. Since
$\Var(a)=\Vs$, the regression slope is $\Cov(G-r(X),a)/\Var(a)=1$.

The population regression of $R$ on $G-r(X)$ has slope $\tau$ by
Lemma~\ref{lem:residual}. Formula~\eqref{eq:tau-id} is therefore the IV
estimand from regressing $R$ on $(G-r(X))$ with instrument $a$: the ratio of
the reduced-form slope $\Cov(R,a)/\Vs$ to the first-stage slope $1$.

\subsection{Orthogonality analysis and the orthogonal score}
\label{app:ortho}

\subsubsection*{Gateaux derivatives of the original score}

\noindent
The score $\psi(W;\tau,m,r):=(2p-1)(Y-m(X))-2\tau(p-r(X))^{2}$ has two
Gateaux derivatives.

\medskip\noindent\textit{$m$-derivative in direction $\delta_m$:}
\[
  \frac{d}{dh}\E[\psi(W;\tau,m+h\delta_m,r)]\Big|_{h=0}
  = -\E[(2p-1)\delta_m(X)]
  = -\E[(2r(X)-1)\delta_m(X)],
\]
where the last equality uses $\E[2p-1\mid X]=2r(X)-1$. This is non-zero in
general, vanishing for all $\delta_m$ only when $r(X)=\frac12$ a.s.

\medskip\noindent\textit{$r$-derivative in direction $\delta_r$:}
\begin{align*}
  \frac{d}{dh}\E[\psi(W;\tau,m,r+h\delta_r)]\Big|_{h=0}
  &= 4\tau\E[(p-r(X))\delta_r(X)] \\
  &= 4\tau\E\!\left[\E[a\mid X]\,\delta_r(X)\right] \\
  &= 0.
\end{align*}
since $\E[a\mid X]=\E[p-r(X)\mid X]=0$ exactly. The $r$-derivative therefore
vanishes automatically; only the $m$-direction fails orthogonality.

\subsubsection*{A Neyman-orthogonal reformulation}

\noindent
Replacing $(2p-1)$ by $2(p-r(X))=2a$ in the numerator gives the score of
\eqref{eq:ort-score},
\[
  \tilde\psi(W;\tau,m,r) := 2(p-r(X))\bigl(Y-m(X)-\tau(p-r(X))\bigr) = 2a(R-\tau a).
\]

\medskip\noindent\textit{$m$-derivative:}
$\frac{d}{dh}\E[\tilde\psi(W;\tau,m+h\delta_m,r)]\big|_{h=0}
= -2\E[(p-r(X))\delta_m(X)] = -2\E[\E[a\mid X]\,\delta_m(X)] = 0$.

\medskip\noindent\textit{$r$-derivative:}
With $a_h=p-(r+h\delta_r)(X)$,
\[
  \tilde\psi(W;\tau,m,r+h\delta_r)=2(a-h\delta_r(X))(R-\tau a+\tau h\delta_r(X)).
\]
Differentiating at $h=0$:
$-2\E[\delta_r(X)(R-\tau a)]+2\tau\E[a\,\delta_r(X)]$.
The first term vanishes because $\E[R-\tau a\mid X]=\E[R\mid X]-\tau\E[a\mid X]=0$.
The second term vanishes because $\E[a\mid X]=0$. Hence the $r$-derivative is
also zero, and $\tilde\psi$ is Neyman-orthogonal in both directions; the natural
estimator it defines is~\eqref{eq:ort-estimator}.

\subsubsection*{What the orthogonal score does and does not give}

\noindent
When nuisances are known, both $\hat\tau$ and $\hat\tau_{\mathrm{ort}}$
converge to $\tau$ and are asymptotically equivalent, since
$n^{-1}\sum_i(2p_i-1)R_i = 2n^{-1}\sum_i a_iR_i+O_P(n^{-1/2})$
(the cross-term $n^{-1}\sum_i(2r(X_i)-1)R_i$ is $O_P(n^{-1/2})$ by the CLT
and $\E[(2r(X)-1)R]=0$).

\begin{sloppypar}
With estimated nuisances, the situation differs. The cross-term becomes
$n^{-1}\sum_i(2\hat r(X_i)-1)(Y_i-\hat m(X_i))$, which is
$O_P(\|\hat r-r\|_{L^2})$ --- not $o_P(n^{-1/2})$ unless $\hat r$ converges
at rate $o(n^{-1/2})$. Consequently, with estimated nuisances $\hat\tau$ and
$\hat\tau_{\mathrm{ort}}$ are in general \emph{not} asymptotically equivalent.
The orthogonal score places $\hat\tau_{\mathrm{ort}}$ within the standard DML
framework of \citet{chernozhukov2018}, suggesting that $\sqrt{n}$-normality
under cross-fitting should hold under appropriate regularity conditions; a
formal proof is left to subsequent work.
\end{sloppypar}

\end{document}